\documentclass[reprint,nofootinbib]{revtex4-2}
\usepackage{amssymb,amsmath,amsthm}
\usepackage{bm,bbm}
\usepackage{graphicx}
\usepackage{overpic}
\usepackage[bookmarks=false]{hyperref}
\hypersetup{colorlinks=true,citecolor=blue,linkcolor=red,%
urlcolor=blue,pdfstartview=FitH,bookmarksopen=true}

\usepackage{array}
\usepackage{adjustbox}
\usepackage[T1]{fontenc}
\usepackage[osf,sc]{mathpazo}
\usepackage{dsfont}
\usepackage{booktabs}
\usepackage{algorithm}
\usepackage{algorithmicx}
\usepackage{algpseudocode}

\usepackage{silence}
\newcommand{\ket}[1]{\mbox{$ | #1 \rangle $}}
\newcommand{\bra}[1]{\mbox{$ \langle #1 | $}}

\newcommand{\tr}{\mathrm{tr}}
\newcommand{\Tr}{\mathrm{Tr}}

\newcommand{\cM}{\mathcal{M}}

\newcommand{\cT}{\mathcal{T}}

\newcommand{\cC}{\mathbb{C}}
\newcommand{\cE}{\mathcal{E}}

\newcommand{\cI}{\mathcal{I}}

\newcommand{\cA}{\mathcal{A}}
\newcommand{\cD}{\mathcal{D}}
\newcommand{\cX}{\mathcal{X}}
\newcommand{\cV}{\mathcal{V}}

\newcommand{\bR}{\mathbb{R}}
\newcommand{\bE}{\mathbb{E}}

\newcommand{\bC}{\mathbb{C}}

\newcommand{\bu}{{\mathbf u}}

\newcommand{\by}{{\bf y}}
\newcommand{\bv}{{\bf v}}
\newcommand{\bz}{{\bf z}}
\newcommand{\bc}{{\bm{\chi}}}

\newcommand{\minover}[1][]{\underset{#1}{\text{min}}}
\newcommand{\subto}{\text{s.t.}}
\newcommand{\invplus}{\phantom{+}}
\newtheoremstyle{note}
  {\topsep/2}              	
  {\topsep/2}            	
  {}                        
  {\parindent}             	
  {\itshape}                
  {.---}                    
  {0pt}                     
  {\thmname{#1}\thmnumber{ \itshape#2}\thmnote{ (#3)}} 

\newtheorem{theorem}{Theorem}

\newtheorem{proposition}[theorem]{Proposition}

\theoremstyle{definition}
\newtheorem{definition}{Definition}

\theoremstyle{remark}
\newtheorem{remark}{Remark}

\begin{document}
\title{Simultaneous reconstruction of quantum process and noise via corrupted sensing}

\author{Mengru Ma}
\author{Jiangwei Shang}
\email{jiangwei.shang@bit.edu.cn}
\affiliation{Key Laboratory of Advanced Optoelectronic Quantum Architecture and Measurement (MOE), School of Physics, Beijing Institute of Technology, Beijing 100081, China}

\date{\today}

\begin{abstract}
Quantum processes, including quantum gates and channels, are integral to various quantum information tasks, making the efficient characterization of these processes and their underlying noise critically important. Here, we propose a framework for quantum process tomography in the presence of corrupted noise that is able to simultaneously reconstruct the process and corrupted noise. Firstly, within the Choi-state representation, we derive the corresponding generalized restricted isometry property and demonstrate the simultaneous reconstruction of various quantum gates under sparse noise. Moreover, in comparison with the Choi-state scheme, the process-matrix representation is employed to simultaneously reconstruct sparse noise and a broader range of target quantum gates. Our results demonstrate that significant reduction in experimental configurations is achievable
even under corrupted noise.
\end{abstract}
\maketitle

\section{Introduction}
The characterization of quantum gates and channels are usually carried out by the fundamental tool of quantum process tomography (QPT), which plays a central role in quantum information processing and quantum error correction~\cite{nielsen2010quantum, PhysRevLett.90.193601, PhysRevLett.93.080502, PhysRevLett.97.170501, science.1162086, PhysRevA.77.032322,gebhart2023learning}. However, standard QPT scales poorly: reconstructing a general $d$-dimensional process requires $O(d^4)$ measurement configurations and parameters, which becomes prohibitive even for moderate system sizes. Consequently, to mitigate the problem of exponential scaling of QPT resources, a number of
alternative methods have been developed for efficient and
selective estimation of quantum processes~\cite{science.1145699, PhysRevLett.100.190403,Branderhorst_2009}, such as randomized benchmarking~\cite{PhysRevA.77.012307, PhysRevLett.109.080505,PhysRevLett.121.170502} and Monte Carlo process certification~\cite{PhysRevLett.106.230501, PhysRevLett.107.210404,PhysRevLett.108.260506}. These protocols estimate the average gate fidelity rather than reconstruct the process matrices. In particular, the protocols inspired by compressed sensing (CS) techniques~\cite{PhysRevLett.105.150401,riofrio2017experimental} have been applied to QPT, which can dramatically reduce the required number of configurations.

If the process matrix is (nearly) sparse in a known basis (e.g., the Pauli bases), it can be efficiently estimated by minimizing its $\ell_1$-norm~\cite{PhysRevLett.106.100401}. While the process $\cE$ was recovered in Ref.~\cite{Flammia2012} by learning the expectation values of its Jamiołkowski state $\rho_\cE$ without using an ancilla. It is claimed that ${O(rd^2\log d)}$ settings were able to characterize $\cE$, where $r$ is the rank of $\rho_{\cE}$ (note that $\cE$ has Kraus rank $r$). Reference~\cite{PhysRevA.90.012110} studied quantum process tomography given the prior information that the process is a unitary or close to a
unitary map, reducing the required data needed for high-fidelity estimation. Moreover, Ref.~\cite{PhysRevB.90.144504} applied compressed sensing techniques to superconducting qubit systems, showing that the number of experimental configurations for a three-qubit Toffoli gate can be reduced by a factor of $\mathord{\sim}\,40$ as compared to standard QPT.
These advances collectively highlight the potential of structured QPT methods in order to enable scalable quantum characterization.

Beyond the latent structure of the process $\cE$ such as sparsity or low Kraus rank~\cite{Kliesch2019}, alternatively, other structural assumptions have also been explored to improve the reconstruction performance. Matrix product states/operators tomography,
exploiting low entanglement structure of the states, enables compact and efficient representations~\cite{cramer2010efficient,PhysRevLett.111.020401,lanyon2017efficient,RevModPhys.93.045003}. For example, tomography schemes that use only a linear number of experimental operations, together with post-processing that scales polynomially with the system size, are particularly effective for states well approximated by matrix product states~\cite{cramer2010efficient}. Similarly, process tomography techniques based on a locally-purified density operator take advantage of
the small bond dimension to characterize the quantum hardware operating circuits of
sufficiently low depth~\cite{torlai2023quantum}. Meanwhile, Bayesian and machine learning–inspired approaches incorporate geometric priors, such as manifold constraints, to learn quantum processes directly from noisy data. These techniques offer flexible, data-driven alternatives that is able to adapt to varying experimental conditions.

Despite these developments, practical implementations of QPT often overlook the impact of corrupted noise in the measurement outcomes. In realistic experiments, occasional measurement errors, detector imperfections, or calibration drifts introduce sparse but potentially corruptions that are not well captured by conventional noise models. Recently, the corrupted sensing quantum state tomography (CSQST) framework was proposed to enable the simultaneous recovery of quantum states and sparse corruption~\cite{Ma_2025}. This naturally motivates extending the simultaneous tomography of quantum states to the characterization of quantum processes, enabling the reconstruction of both the process and the underlying noise.

This work is organized as follows. In Sec.~\ref{Sec2}, we present the corrupted sensing QPT framework via
Choi state and derive the associated generalized restricted isometry property. While Sec.~\ref{Sec:process} introduces the process-matrix scheme for a comparative numerical analysis. Finally, Sec.~\ref{Sec:conclusion} concludes with a summary and outlook for future work.

\section{Corrupted sensing quantum process tomography via Choi State}\label{Sec2}

We first review the restricted isometry property (RIP), from which the generalized RIP, termed GRIP, will be introduced.
The linear map $\cA(\cdot)$ is said to have RIP with a restricted isometric constant ${\delta\in [0,1)}$ if it has the
following property~\cite{SIAMReview}:
\begin{equation}\label{RIP}
  (1-\delta)\|X\|_F^2\leq\|\cA (X)\|_2^2\leq(1+\delta)\|X\|_F^2\,.
\end{equation}
Note that for a vector $\mathbf{x}$, ${\|\mathbf{x}\|_2 = (\sum_i |x_i|^2)^{1/2}}$ denotes the $\ell_2$-norm,
${\|\mathbf{x}\|_1 = \sum_i |x_i|}$ is the $\ell_1$-norm, and $\|\mathbf{x}\|_0$ gives the number of non-zero components of $\mathbf{x}$. 
For a matrix $X$, ${\|X\|_F = (\sum_i \sigma_i(X)^2)^{1/2}}$ represents its Frobenius norm, with $\sigma_i(X)$ being the singular values of $X$. The rank of ${X\in \bC^{d\times d}}$ is at most $r$. Alternatively, the restricted
isometry constant of $\cA$ can be written as
\begin{equation}\label{supRIP}
  \delta_r=\sup_{X\in \cD_r}\left|\|\cA (X)\|_2^2-\|X\|_F^2 \right|,
\end{equation}
where $\cD_r=\{X\in\bC^{d\times d}\!:\text{rank}(X)\leq r,\|X\|_F^2\leq 1\}$, and $\delta_r$ is the smallest value of $\delta$.

\begin{definition}
For the extended matrix ${\cM=[\cA,I]\in \bC^{m\times (d+m)}}$, it has the GRIP with constant $\delta_{r, s}$ if $\delta_{r, s}$ is the smallest value of $\delta$ such that
\begin{equation}\label{GRIP}
\begin{split}
  (1-\delta)(\|X\|_F^2+\|\bv\|_2^2)
  &\leq \biggl\|\mathcal{M}
    \begin{bmatrix} X \\ \bv \end{bmatrix}\biggr\|_2^2 \\
  &\le (1+\delta)(\|X\|_F^2+\|\bv\|_2^2)
\end{split}
\end{equation}
holds for the matrix ${X\in \bC^{d\times d}}$ with rank at most $r$ and ${\bv\in \bR^{m}}$ with \emph{sparsity} ${\|\bv\|_0\leq s}$. 
\end{definition}

Note that the sensing
matrix $\mathcal{A}$ satisfying the RIP does not necessarily imply that the associated matrix $\mathcal{M}$ would satisfy the GRIP~\cite{Zhang2018}.

\subsection{GRIP for corrupted sensing QPT}
\begin{figure*}[!t]
\includegraphics[width=0.7\textwidth]{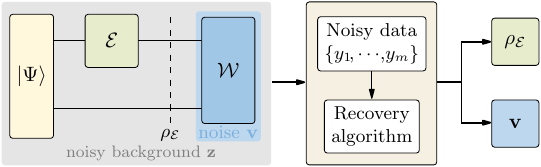}
    \caption{Schematic procedure of the corrupted sensing QPT based on the Choi-state representation. The maximally entangled state $\ket{\Psi}$ is prepared and sent into the quantum process $\mathcal{E}$. The output Choi state $\rho_{\mathcal{E}}$ is then measured to obtain the noisy data $\by = \{y_1, \ldots, y_m\}$. We randomly select $m$ Pauli observables $\{\mathcal{W}_1, \ldots, \mathcal{W}_m\}$, with $m \leq d^4$. By using the recovery algorithm, the Choi state $\rho_{\cE}$ and the measurement noise $\bv$ can be reconstructed simultaneously, and $\bz$ denotes any other potential noise.}
    \label{flow:Choi}
\end{figure*}

The GRIP condition for corrupted sensing quantum state tomography is given in Appendix~\ref{app:RGstate}.
Here, considering the Choi-Jamiołkowski isomorphism~\cite{JAMIOLKOWSKI1972275,CHOI1975285}, the process ${\mathcal{E}\!: \mathbb{C}^{d \times d} \to \mathbb{C}^{d \times d}}$ is completely and uniquely
characterized by the state
\begin{equation}\label{eq:ChoiState}
  \rho_{\cE}=(\cE\otimes \cI)(\ket{\Psi}\bra{\Psi})\,,
\end{equation}
where ${\ket{\Psi}=\frac{1}{\sqrt{d}}\sum_{j=0}^{d-1}\ket{j}\otimes \ket{j}}$. The state ${\rho_{\mathcal{E}} \in \mathbb{C}^{d^2 \times d^2}}$ is assumed to satisfy ${\Tr(\rho_{\mathcal{E}}) = 1, \rho_{\mathcal{E}} \succeq 0}$, and has rank at most $r$.

Let $P^S$ and $P^A$ be the $n$-qubit Pauli operators acting on the system and ancilla, respectively. And
$P^S, P^A \in \mathbb{C}^{d \times d}$, where ${d = 2^n}$.
We define the Pauli observable as
\begin{equation}
  \mathcal{W}= P^S \otimes P^A \in \mathbb{C}^{d^2 \times d^2}\,,
\end{equation}
which acts on the Choi state $\rho_{\mathcal{E}}$.
Let $\{\mathcal{W}_1, \dots, \mathcal{W}_{m}\}$ be independent Pauli observables, each of the form $\mathcal{W}_i = P_i^S \otimes P_i^A$, sampled uniformly at random from the bases set $\{\mathcal{W}_1, \dots, \mathcal{W}_{d^4}\}$.

Define the linear map $\Lambda: \mathbb{C}^{d^2 \times d^2} \to \mathbb{R}^{m}$ as
\begin{equation}\label{eq:Choimap}
(\Lambda(\rho_{\mathcal{E}}))_i = \sqrt{\frac{d^2}{m}} \Tr(\mathcal{W}_i \rho_{\mathcal{E}})\,,
\end{equation}
and consider the measurement model
\begin{equation}
\mathbf{y} = \Lambda(\rho_{\mathcal{E}}) + \mathbf{v} + \mathbf{z}\,,
\end{equation}
where $\mathbf{v}$ represetns the sparse corrupted noise with $\|\mathbf{v}\|_0 \le s$, and $\mathbf{z}$ is the measurement noise. We denote the extended matrix by ${\mathcal{M} = [\Lambda, I] \in \bC^{m \times (d^4 + m)}}$, and then derive the conditions under which $\mathcal{M}$ satisfies the generalized restricted isometry property.
\begin{proposition}\label{proposition1}
Let $\delta \in [0,1)$. If
\begin{equation}
m \ge C_1 \, r d^2 \log^6 d\,, \quad \text{and} \quad \sup \|\mathbf{v}\|_{\infty} \le \frac{\delta}{4d\sqrt{s}}\,,
\end{equation}
for some constant ${C_1 = O(1/\delta^2)}$, then with high probability, the extended measurement matrix ${\mathcal{M} = [\Lambda, I]}$ satisfies the GRIP with constant ${\delta_{r,s} \le \delta}$.
\end{proposition}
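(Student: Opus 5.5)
The plan is to expand the squared norm of the extended measurement and control the three resulting terms separately. For a Hermitian $X$ of rank at most $r$ and a vector $\bv$ with $\|\bv\|_0\le s$, we have $\mathcal{M}[X;\bv]=\Lambda(X)+\bv$. Hence
\begin{equation*}
\Bigl\|\mathcal{M}\begin{bmatrix} X\\ \bv\end{bmatrix}\Bigr\|_2^2=\|\Lambda(X)\|_2^2+\|\bv\|_2^2+2\langle \Lambda(X),\bv\rangle .
\end{equation*}
The terms are handled as follows.
\begin{itemize}
\item The first is controlled by the ordinary RIP of $\Lambda$.
\item The second is exact.
\item The cross term has to be bounded by roughly $\tfrac{\delta}{2}(\|X\|_F^2+\|\bv\|_2^2)$.
\end{itemize}
By homogeneity I will normalize $\|X\|_F^2+\|\bv\|_2^2=1$. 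I will then show $\bigl|\|\Lambda(X)\|_2^2-\|X\|_F^2\bigr|\le\delta/2$ and $2|\langle\Lambda(X),\bv\rangle|\le\delta/2$, which together give $\delta_{r,s}\le\delta$.

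\textbf{Step 1 (RIP of $\Lambda$).} The map $\Lambda$ in Eq.~\eqref{eq:Choimap} is a normalized random Pauli sampling operator on $D=d^2$ dimensions: each $\mathcal{W}_i=P^S_i\otimes P^A_i$ is a uniformly random $2n$-qubit Pauli operator, and the prefactor is $\sqrt{D/m}$. I will therefore invoke the Pauli-RIP theorem of Liu (as used in the state case in Appendix~\ref{app:RGstate}, and in the spirit of Ref.~\cite{Flammia2012}). That theorem states that $m\ge C r D\log^6 D$ random Pauli samples give RIP constant $\delta/2$ with high probability, with $C=O(1/\delta^2)$. Since $D=d^2$ and $\log^6 d^2=64\log^6 d$, this is the stated condition $m\ge C_1 rd^2\log^6 d$, after absorbing constants into $C_1$.

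\textbf{Step 2 (cross term).} The inner product involves only the at most $s$ coordinates in the support $T$ of $\bv$. By Cauchy--Schwarz,
\begin{equation*}
|\langle\Lambda(X),\bv\rangle|\le \|\bv\|_1\,\|\Lambda(X)\|_\infty\le \sqrt{s}\,\|\bv\|_\infty\cdot \sqrt{\tfrac{d^2}{m}}\,\max_i|\Tr(\mathcal{W}_iX)| .
\end{equation*}
Pauli operators have unit operator norm, so $|\Tr(\mathcal{W}_iX)|\le\|X\|_*\le\sqrt{r}\|X\|_F$. For the states of interest one can instead use the cruder bound $|\Tr(\mathcal{W}_iX)|\le \|\mathcal W_i\|_F\|X\|_F=d\|X\|_F$. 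This yields
\begin{equation*}
2|\langle\Lambda(X),\bv\rangle|\le 2\sqrt{s}\,\|\bv\|_\infty\, d\sqrt{d^2/m}\,\|X\|_F .
\end{equation*}
Inserting $\|\bv\|_\infty\le \delta/(4d\sqrt{s})$ reduces this to $\tfrac{\delta}{2}\sqrt{d^2/m}\,\|X\|_F\le \tfrac{\delta}{2}$, because $m\ge d^2$ and $\|X\|_F\le1$. Combining this with Step 1 gives $\bigl|\|\mathcal{M}[X;\bv]\|_2^2-1\bigr|\le\delta$.

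\textbf{Main obstacle.} The hard part is bookkeeping the $d$-dependent prefactors in Step 2 so that the stated hypothesis $\|\bv\|_\infty\le\delta/(4d\sqrt s)$ exactly absorbs the factor $\sqrt{d^2/m}\,\max_i|\Tr(\mathcal W_iX)|$. Whether one uses the factor $d$ or $\sqrt r$, and whether $\|X\|_F\le1$ is used or the product $\|X\|_F\|\bv\|_2$ is bounded by $\tfrac12$, changes the constant. I would first try the version above. If the constants do not close, I would fall back to bounding $2\|X\|_F\le \|X\|_F^2+1$, which keeps everything within the normalized budget. A secondary point is that the Pauli-RIP theorem is needed for a $D=d^2$ system, including the uniformity of the RIP over all rank-$r$ matrices. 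Both follow directly from the cited result once $\Lambda$ is identified as a standard Pauli sampling map.
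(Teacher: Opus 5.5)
Your Step 1 matches the paper, which likewise just cites the Pauli RIP on the $d^2$-dimensional Choi space to get $\delta_1\le\delta/2$. Step 2, however, does not go through. The inequality $\|\bv\|_1\le\sqrt{s}\,\|\bv\|_\infty$ is false: for an $s$-sparse vector you only have $\|\bv\|_1\le s\|\bv\|_\infty$ (take all $s$ nonzero entries equal). Cauchy--Schwarz restricted to the support gives the same factor, $\|\Lambda(X)_T\|_2\|\bv\|_2\le s\,\|\Lambda(X)\|_\infty\|\bv\|_\infty$.

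With the correct factor and your choice $|\Tr(\mathcal{W}_iX)|\le d\|X\|_F$, your argument only bounds the cross term by $\frac{\delta}{2}\,d\sqrt{s/m}\,\|X\|_F$. The $\ell_\infty$ hypothesis forces $\|\bv\|_2\le\delta/(4d)$, hence $\|X\|_F\approx 1$, so this is at most $\delta/2$ only when $m\gtrsim s d^2$. Nothing in the hypotheses gives that, and it fails in the paper's regime $s=\lfloor\eta m\rfloor$ whenever $\eta d^2>1$. Switching to $\|X\|_{\tr}\le\sqrt r\|X\|_F$ still leaves $\frac{\delta}{2}\sqrt{rs/m}$, which is not $\le\delta/2$ for general $r$. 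So what you flagged as constant bookkeeping is really a factor growing with $d$ (or $r$) and $s/m$. The fallback $2\|X\|_F\le\|X\|_F^2+1$ cannot absorb it.

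The ingredient you mentioned and then discarded is exactly what the paper uses. Since $\rho_{\mathcal{E}}$ is a Choi state, $|\Tr(\mathcal{W}_i\rho_{\mathcal{E}})|\le\|\mathcal{W}_i\|_{\mathrm{op}}\|\rho_{\mathcal{E}}\|_{\tr}=1$, so $\|\Lambda(\rho_{\mathcal{E}})\|_2\le\sqrt{m}\cdot\sqrt{d^2/m}=d$. Cauchy--Schwarz with $\|\bv\|_2\le\sqrt{s}\|\bv\|_\infty$ then gives $2|\langle\Lambda(\rho_{\mathcal{E}}),\bv\rangle|\le 2d\sqrt{s}\,\|\bv\|_\infty\le\delta/2$. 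This is precisely why the hypothesis reads $\delta/(4d\sqrt{s})$. Your H\"older route also closes with this unit bound and the correct factor $s$, giving $\frac{\delta}{2}\sqrt{s/m}\le\frac{\delta}{2}$. Note that this relies on the trace normalization of $\rho_{\mathcal{E}}$, not on $\|X\|_F\le 1$.

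If you prefer to keep your Frobenius normalization over arbitrary rank-$r$ $X$, a different repair works. Bound $\|\Lambda(X)\|_2\le\sqrt{1+\delta/2}\,\|X\|_F$ using the RIP from Step 1. Then $2|\langle\Lambda(X),\bv\rangle|\le 2\sqrt{1+\delta/2}\,\sqrt{s}\,\|\bv\|_\infty\le\frac{\sqrt{1+\delta/2}}{2d}\,\delta<\frac{\delta}{2}$ for $d\ge 2$.
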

\begin{proof}
Similar to the proof in Appendix~\ref{app:RGstate}, the GRIP constant $\delta_{r,s}$ can be equivalently expressed as
\begin{equation}
  \delta_{r,s}=\sup_{(\rho_{\mathcal{E}},\bv)}\left|\biggl\|\cM\begin{bmatrix}
\rho_{\mathcal{E}} \\
\bv
\end{bmatrix}\biggr\|_2^2-\|\rho_{\mathcal{E}}\|_F^2-\|\bv\|_2^2\right|,
\end{equation}
where the supremum is taken over rank-$r$ matrices $\rho_{\mathcal{E}} \in \mathbb{C}^{d^2 \times d^2}$ and $s$-sparse vectors ${\mathbf{v} \in \mathbb{R}^{m}}$.
We split the expression as
\begin{equation}
  \delta_{r,s} \le \underbrace{\sup_{(\rho_{\mathcal{E}}, \mathbf{v})} \left| \|\Lambda(\rho_{\mathcal{E}})\|_2^2 - \|\rho_{\mathcal{E}}\|_F^2 \right|}_{\delta_1} + \underbrace{2 \sup_{(\rho_{\mathcal{E}}, \mathbf{v})} |\langle \Lambda(\rho_{\mathcal{E}}), \mathbf{v} \rangle|}_{\delta_2}.
\end{equation}

From the known RIP results, ${\delta_1 \le \delta/2}$ holds with high probability if ${m \ge C_1 r d^2 \log^6 d}$. For $\delta_2$, we note that ${|\Tr(\mathcal{W}_i \rho_{\mathcal{E}})| \le 1}$, and
\begin{equation}
  \sup\|\mathbf{v}\|_2 \le \sup\ \sqrt{s} \|\mathbf{v}\|_{\infty}\,,
\end{equation}
where ${\|\bv\|_{\infty}=\max_i|v_i|}$.
Therefore,
\begin{equation}
  \delta_2 \le 2 \sqrt{\frac{d^2}{m}} \cdot \sqrt{m s} \cdot \sup \|\mathbf{v}\|_{\infty} = 2d \sqrt{s} \sup \|\mathbf{v}\|_{\infty}\,.
\end{equation}
To ensure ${\delta_2 \le \delta/2}$, it suffices that
\begin{equation}
  \sup \|\mathbf{v}\|_{\infty} \le \frac{\delta}{4d \sqrt{s}}\,.
\end{equation}
Combining both bounds yields ${\delta_{r,s} \le \delta}$.
\end{proof}
\subsection{The estimator}
We can solve the following problem to recover the Choi state $\rho_{\mathcal{E}}$ and sparse noise $\bv$ \emph{simultaneously}:
\begin{equation}\label{eq:solveChoi}
  \begin{aligned} \minover[\rho^{\prime}_{\mathcal{E}},\bv^{\prime}] \quad &\frac{1}{2}\|\by-\tilde{\Lambda}(\rho^{\prime}_{\mathcal{E}})-\bv^{\prime}\|_{2}^2+\tau_1\cdot\|\rho^{\prime}_{\mathcal{E}}\|_{\tr}+\tau_2\cdot\|\bv^{\prime}\|_{1}\,,\\
   \subto \quad
      & \rho^{\prime}_{\mathcal{E}}\succeq 0\,, \Tr_S(\rho^{\prime}_{\mathcal{E}})=\frac{I}{d}\,,\\
      & \tau_1, \tau_2>0\,.
  \end{aligned}
\end{equation}
Here, $\tilde{\Lambda}(\rho^{\prime}_{\mathcal{E}})$ omits the factor $\sqrt{d^2/m}$ from Eq.~\eqref{eq:Choimap}, as this factor is not necessary for the estimator to work~\cite{PhysRevLett.132.240804}. The trace norm (or nuclear norm) of $X$ is defined as ${\|X\|_{\tr} = \Tr(\sqrt{X^{\dagger}X})}$, and $\rho^{\prime}_{\mathcal{E}}$ and $\mathbf{v}^{\prime}$ are the optimization variables. 
It is worth noting that the positive semidefinite constraint on the Choi state and the trace-preserving constraint of the quantum channel together lead to the Choi state having a constant trace norm. Therefore, the trace norm term in Eq.~\eqref{eq:solveChoi} can be omitted without significantly affecting the results~\cite{Kliesch2019}. The schematic framework is illustrated in Fig.~\ref{flow:Choi}.

To evaluate the reconstruction quality of the Choi state $\hat{\rho}_{\mathcal{E}}$ and the sparse noise $\hat{\bv}$, we adopt two standard measures. 
The first is the (squared) fidelity~\cite{PhysRevLett.106.230501,Jozsa1994}, defined as
\begin{equation}\label{fidelity}
F(\hat{\rho}_{\mathcal{E}},\rho_{\mathcal{E}}) = \left( \Tr \sqrt{ \hat{\rho}_{\mathcal{E}}^{1/2} \rho_{\mathcal{E}} \hat{\rho}_{\mathcal{E}}^{1/2} } \right)^2,
\end{equation}
which evaluates how close the estimated Choi state $\hat{\rho}_{\mathcal{E}}$ is to the true state $\rho_{\mathcal{E}}$.
The second measure is the mean squared error (MSE),
\begin{equation}\label{MSE}
T_{\text{MSE}} = \frac{1}{m} \sum_{i=1}^m (\bv_i - \hat{\bv}_i)^2,
\end{equation}
which quantifies the discrepancy between the estimated sparse noise $\hat{\bv}$ and the true noise $\bv$.

\subsection{Applications}\label{SecChoiapp}
Recall that the maximally entangled state is written as
${\ket{\Psi} = \frac{1}{\sqrt{d}} \sum_{j=0}^{d-1} \ket{j} \otimes \ket{j}}$, 
and the corresponding density operator is
${\varrho = \ket{\Psi}\bra{\Psi}}$. 
Let ${U \in \mathbb{C}^{d \times d}}$ be a unitary operator acting on the $n$-qubit system.
Then, the Choi state associated with the unitary channel $\mathcal{E}(\cdot) = U (\cdot) U^\dagger$ can be written as
\begin{equation}
  \rho_{\mathcal{E}} =
( U \otimes I_d )
\, \varrho \,
( U^\dagger \otimes I_d )\,,
\end{equation}
where $I_d$ denotes the identity operator on the $d$-dimensional Hilbert space.
Since the conclusion of Proposition~\ref{proposition1} is essentially a worst-case scenario, $\|\bv\|_\infty$ is bounded. In the numerical simulations, however, we adopt a more realistic model by using $s$-sparse Gaussian noise with zero mean and unit variance, where ${s = \lfloor \eta m \rfloor}$.

\begin{figure}[t]
\includegraphics[width=.95\columnwidth]{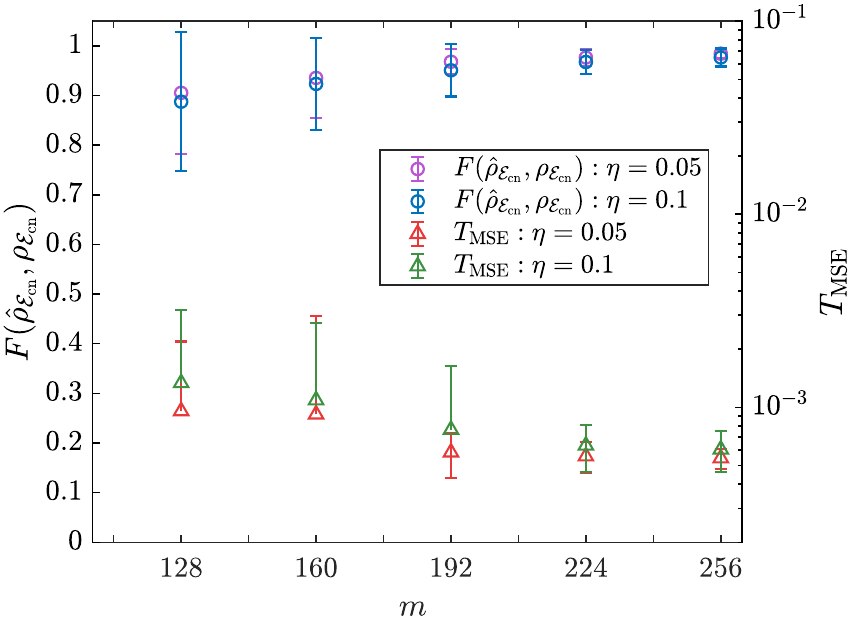}
    \caption{Fidelity $F(\hat{\rho}_{\mathcal{E}_{\rm cn}}, \rho_{\mathcal{E}_{\rm cn}})$ and MSE $T_{\text{MSE}}$ as functions of the number of sampled  Pauli observables $\mathcal{W}$ (denoted by $m$). The error bars are obtained from $100$ runs of tomography, with each run selecting $m$  Pauli observables randomly. The purple and blue  points represent the fidelity between the reconstructed Choi state and true Choi state of the CNOT gate, for sparsity ratios ${\eta = 0.05}$ and $0.1$, respectively. The red and green points show the MSE between the reconstructed sparse Gaussian noise and true sparse Gaussian noise, for sparsity ratios ${\eta = 0.05}$ and $0.1$, respectively. The Gaussian noise has a mean of zero, a standard deviation of $1$. The
regularization parameters are chosen as ${\tau_1 = 0.01m, \tau_2 = 10^{-2}}$.}
    \label{Fig_ChoiCNOT_ave_100_N_1000_eta_0.05_0.1_tex}
\end{figure}

For two-qubit gates, as illustrated in Fig.~\ref{Fig_ChoiCNOT_ave_100_N_1000_eta_0.05_0.1_tex} 
for the CNOT gate with sparsity ratios ${\eta = 0.05}$ and ${\eta = 0.1}$, the fidelity $F$ (and the mean squared error $T_{\text{MSE}}$) increase (decrease) with the number of sampled  Pauli observables $m$, thereby confirming the effectiveness of the  method. 
When ${\eta = 0.05}$, the overall performance in terms of fidelity and $T_{\mathrm{MSE}}$ is better than that in the case of ${\eta = 0.1}$. The measurement data are obtained using $10^3$ independent trials for each measurement setting, with finite-sampling noise absorbed into $\bz$.  However, the Choi-state scheme based on Pauli observables does not make effective use of the sampled observables.
Therefore, in Sec.~\ref{Sec:process} we conduct a more detailed numerical study using the process-matrix scheme. For instance, for the CNOT  gate, ${m = 192}$ is required to reach a fidelity of ${F(\hat{\rho}_{\mathcal{E}_{\rm cn}}, \rho_{\mathcal{E}_{\rm cn}})\approx 0.9506}$ under $\eta=0.1$, with an MSE of ${T_{\text{MSE}} \approx 7.66 \times 10^{-4}}$. In contrast, as shown in Sec.~\ref{2qgate}, a fidelity of ${F(\hat{\chi}_{\text{cn}}, \chi_{\text{cn}}) \approx 0.958}$ can be achieved using only $64$ configurations.

Beyond two-qubit gates, we turn to the three-qubit Toffoli gate. The Toffoli gate, also known as the CCNOT gate, inverts the target qubit (third qubit) if the first and second qubits are both $\ket{1}$, namely,
\begin{equation}
  U_{\text{Tof}}=\ket{0}\bra{0}\otimes I\otimes I+\ket{1}\bra{1}\otimes U_{\text{CNOT}}\,.
\end{equation}
For reference, Fig.~\ref{Fig_ChoiTof_ave_50_N_1000_eta_0.05_0.1} presents the results for the three-qubit Toffoli gate. When ${m \gtrsim 3072}$, the fidelity ${F(\hat{\rho}_{\mathcal{E}_{\rm Tof}}, \rho_{\mathcal{E}_{\rm Tof}})\gtrsim 0.95}$; however, even at ${m = 4096}$, the fidelity attains $0.9631$ under ${\eta = 0.05}$. Correspondingly, the process-matrix scheme is presented in Sec.~\ref{3qgate}.

\begin{figure}[t]
\includegraphics[width=.95\columnwidth]{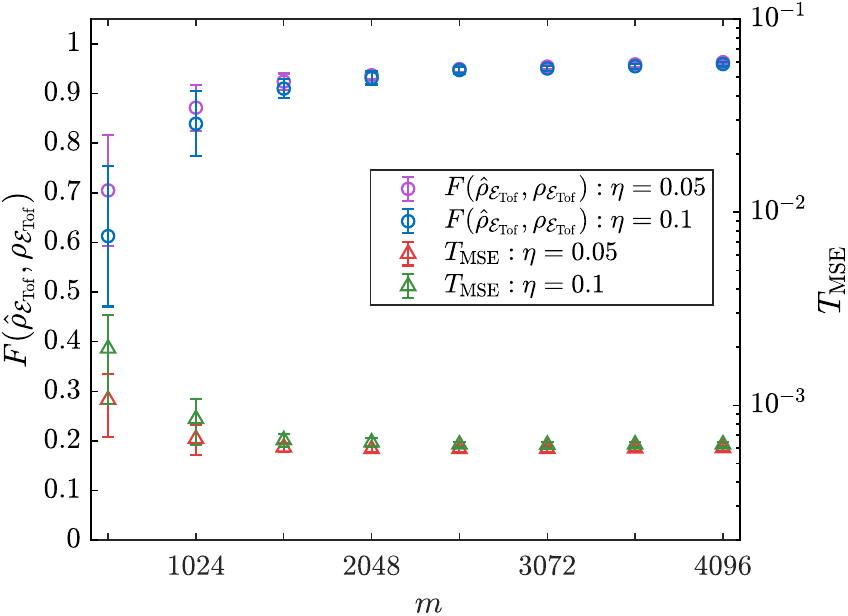}
    \caption{Fidelity $F(\hat{\rho}_{\mathcal{E}_{\rm Tof}}, \rho_{\mathcal{E}_{\rm Tof}})$ and MSE $T_{\text{MSE}}$ as functions of the number of sampled Pauli observables $\mathcal{W}$ (denoted by $m$). The error bars are obtained from $50$ runs of tomography, with each run selecting $m$ Pauli observables randomly. The purple and blue points represent the fidelity between the reconstructed Choi state and true Choi state of Toffoli gate, for sparsity ratios ${\eta = 0.05}$ and $0.1$, respectively. The red and green points show the MSE between the reconstructed sparse Gaussian noise and true sparse Gaussian noise, for sparsity ratios ${\eta = 0.05}$ and $0.1$, respectively. The Gaussian noise has a mean of zero, a standard deviation of $1$. The
regularization parameters are chosen as ${\tau_1 = 0.01m, \tau_2 = 10^{-2}}$.}
    \label{Fig_ChoiTof_ave_50_N_1000_eta_0.05_0.1}
\end{figure}
%

\section{Corrupted sensing quantum process tomography via process matrix}\label{Sec:process}

\subsection{The process-matrix representation} 
In the Kraus representation, a completely positive (CP) map $\cE$ is written as
\begin{equation}\label{KrauRepMap} \cE(\rho)=\sum_{i}E_i\rho E_i^{\dagger}\,,
\end{equation}
where $\{E_i\}$ are the Kraus operators. The map is trace preserving (TP) when ${\sum_i E_i^{\dagger}E_i=I}$. Usually it is more convenient to consider an equivalent description of $\cE$, where the process-matrix representation of $\cE$ can be obtained by writing the Kraus operators in a certain set of bases ${\Gamma_\alpha\in \mathbb{C}^{d\times d}}$ such that ${E_i=\sum_{\alpha=1}^{d^2} e_{i\alpha}\Gamma_\alpha}$, for some complex numbers $e_{i\alpha}$. Then the process-matrix representation can be expressed as
\begin{equation}\label{ProMat}
  \cE(\rho)=\sum_{\alpha,\beta=1}^{d^2}\chi_{\alpha\beta}\Gamma_{\alpha}\rho\Gamma_{\beta}^{\dagger}\,,
\end{equation}
where ${\chi_{\alpha\beta}=\sum_i e_{i\alpha}e^*_{i\beta}}$  are the elements of the ${d^2\times d^2}$ process matrix in the $\Gamma_\alpha$ bases for a system of $n$ qubits and ${d=2^n}$. The process matrix
is a positive-semidefinite matrix (implying that it is Hermitian), and satisfies the TP condition:
\begin{align}
    \chi\succeq 0\,,\\
\sum_{\alpha,\beta=1}^{d^2}\chi_{\alpha\beta}\Gamma_\beta^{\dagger}\Gamma_\alpha=I\,.
\end{align}
Hence, if the $\Gamma_\alpha$ bases fulfill ${\Tr(\Gamma_\beta^{\dagger}\Gamma_\alpha)=\delta_{\alpha\beta}}$, ${\Tr(\chi)=d}$, and ${\Tr(\Gamma_\beta^{\dagger}\Gamma_\alpha)=d\delta_{\alpha\beta}}$,  then ${\Tr(\chi)=1}$.

Given that any representation of a general map has ${d^2(d^2-1)}$ independent parameters, the specification of such a map can be accomplished by probing it with $d^2$ input states $\rho^{\text{in}}$, forming a set of Hermitian operator bases. An informationally complete measurement on the resulting output state $\cE(\rho^{\text{in}})$ reveals ${d^2-1}$ independent parameters that characterize the map.

\begin{figure*}[t]
\includegraphics[width=0.7\textwidth]{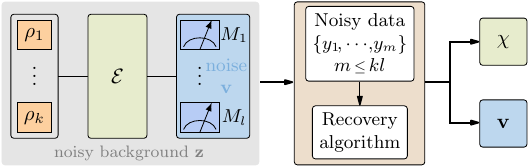}
    \caption{Schematic procedure of the corrupted sensing QPT based on the process-matrix representation. Consider preparing different initial states $\rho_o$, applying the process $\cE$, and then measuring the outputs using a set of observables $M_o$ to get the noisy data $\by\!:\{y_1,\cdots,y_m\}$. Here, we select $m$ configurations $(\rho_o,M_o), o= 1,\cdots, m\leq kl$ randomly. By using the recovery algorithm, the process matrix $\chi$ and the measurement noise $\bv$ can be reconstructed simultaneously, and $\bz$ denotes any other potential noise.}
    \label{flow}
\end{figure*}
\subsection{Problem formulation}
Consider preparing different initial input states $\{\rho_1,\cdots,\rho_k\}$, applying the process  $\cE$, and then measuring the outputs using the observable $M$ from the set ${\{M_1,\cdots,M_l\}}$. For a pair  ${(\rho_o,M_o), o=1,\cdots,m\leq kl}$, the presence of noise in the measurement process might corrupt the measurement data. Consequently, the outcome can be expressed as
\begin{equation}\label{eq:model}
\by=\Phi\bc+\bv+\bz\,,
\end{equation}
where ${\bm{\chi}\in \mathbb{C}^{d^4\times 1}}$ is the vectorized form of the process matrix $\chi$ in the $\{\Gamma_\alpha\}$ bases, $\Phi$ is an ${m\times d^4}$ matrix with elements of the form ${\Tr(\Gamma_{\alpha}\rho_o\Gamma_{\beta}^{\dagger}M_o)/\sqrt{m}}$.
And the corrupted noise is considered as a structured vector ${\bv\in \mathbb{R}^m}$, and ${\bz\in \mathbb{R}^m}$
is any other kind of unstructured noise.

If we assume the process matrix $\chi$ being sparse in any known basis, and the noise as a sparse vector $\bv$, the following problem can be solved to recover  $\chi$ and $\bv$ \emph{simultaneously}:
\begin{equation}\label{eq:sparseChiV}
  \begin{aligned}
    \minover[\chi^{\prime},\bv^{\prime}] \quad &\frac{1}{2}\|\by-\Phi\bc^{\prime}-\bv^{\prime}\|_{2}^2+\mu_1\cdot\|\bc^{\prime}\|_{1}+\mu_2\cdot\|\bv^{\prime}\|_{1}\,,\\
   \subto \quad
      & \chi^{\prime}\succeq 0\,, \sum_{\alpha,\beta}\chi^{\prime}_{\alpha\beta}\Gamma_\beta^{\dagger}\Gamma_\alpha=I\,,\\
      & \mu_1, \mu_2>0\,.
  \end{aligned}
\end{equation}
The schematic framework is shown in Fig.~\ref{flow}.

\subsection{Two-qubit entangling gates}\label{2qgate}
Consider encoding the qubits in polarization states with ${\ket{H}=\ket{0}}$, ${\ket{V}=\ket{1}}$, and $\ket{D(A)}=(\ket{H}\pm\ket{V})/\sqrt{2}$, $\ket{R(L)}=(\ket{H}\pm i\ket{V})/\sqrt{2}$. 
For a two-qubit gate, standard process tomography requires $256$ different settings of input
states and measurement projectors. There is significant flexibility in selecting the tomographic input and measurement settings. As one accessible choice, our input states $\rho$ are chosen from $\{HH, VH, DV, RH, RV, VV, HV, HA,  HR, RR, RA, DA, \allowbreak DL, VA, VR, DH\}$ and the measurements $M$ are from $\{HH, HV, VH, VV, HD, HR, VR, VD, DD, LR, LD, DL, \allowbreak DV, LV, DH, LH\}$. In total, there are ${16\times 16=256}$ configurations of the $(\rho,M)$ pairs.

We employ two approaches for the numerical calculations using MATLAB: the results for the two-qubit gates in this section are obtained using the \textsc{cvx}~\cite{cvx} package with the solver \textsc{sdpt3}, while for larger scales, the results presented in Secs.~\ref{3qgate} and~\ref{4qgate} use the \textsc{cvx} package with the solver \textsc{mosek} to achieve more stable solutions. Here, by randomly selecting ${m\leq 256}$ configurations from all combinations, the process matrix $\chi$ and corrupted noise $\bv$ can be obtained simultaneously by solving Eq.~\eqref{eq:sparseChiV}. The process matrices in the Pauli bases for the CNOT, CZ, and SWAP gates are given in Appendix~\ref{app:2q_matrix}. Using the Pauli bases set $\{\Gamma_{\alpha}^{\text{Pau}}\}$, we assess the reconstruction of each gate by means of the fidelity $F(\hat{\chi},\chi)$ and the mean squared error $T_{\text{MSE}}$, as shown in Fig.~\ref{Fig_CNOT_Gau_ave_100_N_1000} and Fig.~\ref{Fig_CZ_SWAP_supplement_tex} in Appendix~\ref{app:2q_matrix}.

For the CNOT gate, a fidelity of ${F(\hat{\chi}_{\text{cn}}, \chi_{\text{cn}}) \approx 0.962}$ is achieved with ${m = 64}$ configurations, corresponding to an MSE of ${T_{\text{MSE}} \approx 2 \times 10^{-4}}$. Similarly, the two-qubit CZ gate requires only ${m = 64}$ configurations to achieve ${F(\hat{\chi}_{\text{cz}},\chi_{\text{cz}}) \approx 0.971}$, corresponding to $T_{\text{MSE}}\approx 1.368\times 10^{-4}$. Finally, the SWAP gate reaches ${F(\hat{\chi}_{\text{sw}}, \chi_{\text{sw}})\approx 0.978}$ at ${m=48}$ (i.e., a sampling rate of $18.75\%$), with $T_{\text{MSE}}\approx 1.55\times 10^{-4}$.

\begin{figure}[t]
\includegraphics[width=.95\columnwidth]{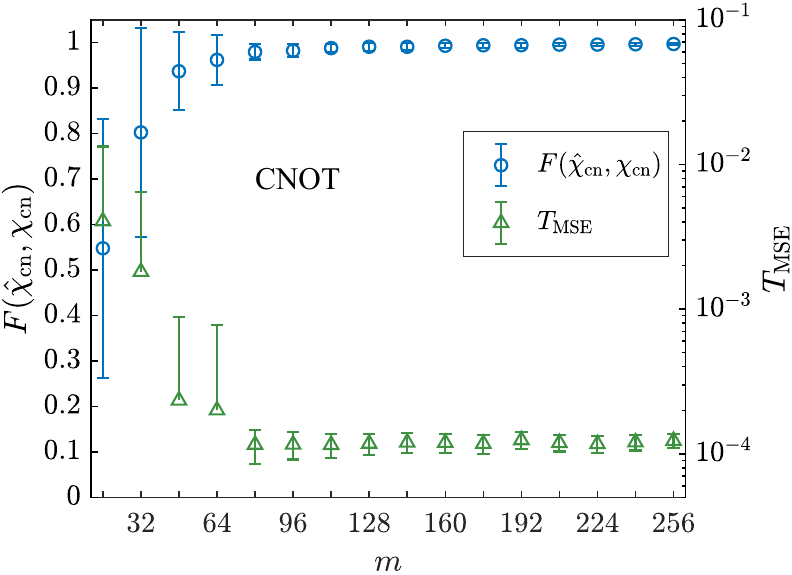}
    \caption{Fidelity $F(\hat{\chi}_{\text{cn}},\chi_{\text{cn}})$ and MSE $T_{\text{MSE}}$ as functions of the number of configurations $m$. The error bars are obtained from $100$ runs of tomography, with each run selecting $m$ combinations randomly. The blue points represent the fidelity between the reconstructed process matrix and true process matrix of the CNOT gate. The green points show the MSE between the reconstructed sparse Gaussian noise and true sparse Gaussian noise. The Gaussian noise has a mean of zero, a standard deviation of $1$, and a sparsity level of ${s = \lfloor 0.1m \rfloor}$. The
regularization parameters are chosen as ${\mu_1 = 10^{-5}, \mu_2 = 10^{-3}}$.}
    \label{Fig_CNOT_Gau_ave_100_N_1000}
\end{figure}
\subsection{Three-qubit gates}\label{3qgate}
In this section, we prepare pairwise combinations of the $64$ inputs $\{\ket{H},\ket{V},\ket{D},\ket{R}\}^{\otimes 3}$ and $64$ observables $\{\ket{H},\ket{V},\ket{D},\ket{R}\}^{\otimes 3}$, and then randomly select ${m\leq 4096}$ configurations from all ${d^4=4096}$ possible combinations. The process matrix in the Pauli bases of the Toffoli gate is given in Appendix~\ref{app:3q_matrix}. With the Pauli bases set $\{\Gamma_{\alpha}^{\text{Pau}}\}$, we assess the reconstruction by means of $F(\hat{\chi}_{\text{Tof}},\chi_{\text{Tof}})$ and $T_{\text{MSE}}$; see Fig.~\ref{Fig_Tof_Gau_ave_50_N_1000}.

\begin{figure}[t]
    \includegraphics[width=.95\columnwidth]{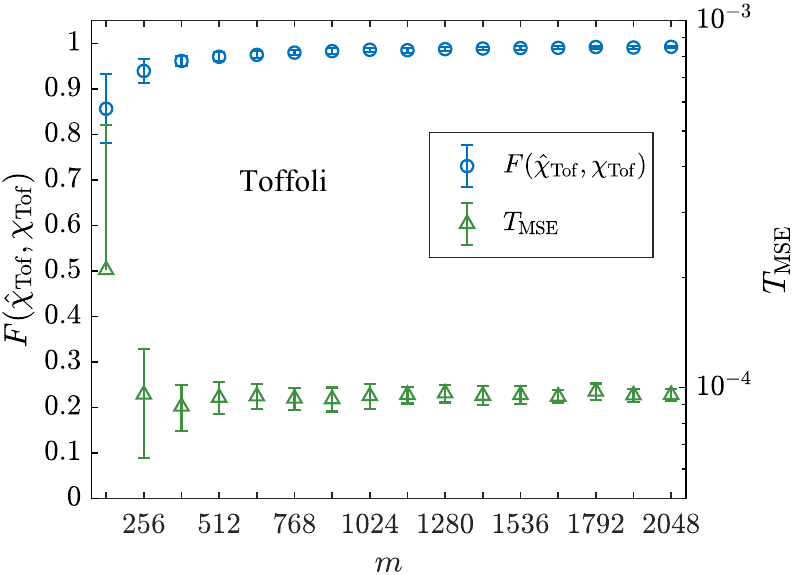}
    \caption{Fidelity $F(\hat{\chi}_{\text{Tof}},\chi_{\text{Tof}})$ and MSE $T_{\text{MSE}}$ as functions of the number of configurations $m$. The error bars are obtained from $50$ runs of tomography, with each run selecting $m$ combinations randomly. The blue points represent the fidelity between the reconstructed process matrix and true process matrix of the Toffoli gate. The green points show the MSE between the reconstructed sparse Gaussian noise and true sparse Gaussian noise. The Gaussian noise has a mean of zero, a standard deviation of $1$, and a sparsity level of ${s = \lfloor 0.1m \rfloor}$. The
regularization parameters are chosen as ${\mu_1 = 10^{-5}, \mu_2 = 10^{-3}}$.}
    \label{Fig_Tof_Gau_ave_50_N_1000}
\end{figure}

The three-qubit Toffoli gate requires lower sampling compared to two-qubit gates to achieve comparable fidelities: a fidelity of ${F(\hat{\chi}_{\text{Tof}},\chi_{\text{Tof}}) \approx 0.9615}$ is already achieved at ${m=384}$ (i.e., a sampling rate of $ 9.375\%$), with an MSE of $T_{\text{MSE}}\approx 8.89\times 10^{-5}$, highlighting the scalability of the method under increased system size. Furthermore, when $m = 1664$ (corresponding to a sampling rate of $40.625\%$), the fidelity reaches $0.99$.

The Fredkin gate or the three-qubit conditional SWAP gate swaps the last two qubits if the first qubit is $\ket{1}$, namely,
\begin{equation}
  U_{\text{Fred}}=\ket{0}\bra{0}\otimes I\otimes I+\ket{1}\bra{1}\otimes U_{\text{SWAP}}\,.
\end{equation}
The process matrix in the Pauli bases of the Fredkin gate is given in Appendix~\ref{app:3q_matrix}. With the Pauli bases set $\{\Gamma_{\alpha}^{\text{Pau}}\}$, we assess the reconstruction by means of $F(\hat{\chi}_{\text{Fred}},\chi_{\text{Fred}})$ and $T_{\text{MSE}}$; see Fig.~\ref{Fig_Fred_Gau_ave_50_N_1000} in Appendix~\ref{app:3q_matrix}.

The process matrix of the Fredkin gate is reconstructed with a fidelity of ${F(\hat{\chi}_{\text{Fred}},\chi_{\text{Fred}})\approx 0.9553}$ at ${m=256}$ (i.e., a sampling rate of $ 6.25\%$), accompanied by an MSE of ${T_{\text{MSE}}\approx 9.2\times 10^{-5}}$. The fidelity is further improved to ${F(\hat{\chi}_{\text{Fred}},\chi_{\text{Fred}}) \approx 0.99}$ when the sampling rate reaches $37.5\%$.

\subsection{Four-qubit CCCZ gate}\label{4qgate}
The $x$-controlled $Z$ gate, also denoted as $C^xZ$, applies a $Z$-gate to the target qubit controlled by $x$ qubits. For ${x=3}$, it is commonly referred to as the CCCZ gate~\cite{PhysRevA.91.032311, song2017continuous, huang2024demonstration}, a typical four-qubit gate, given by the expression:
\begin{equation}
  U_{\text{CCCZ}}=I-2\ket{1111}\bra{1111}\,.
\end{equation}

In this section, we prepare pairwise combinations of the ${d^2=256}$ inputs $\{\ket{H},\ket{V},\ket{D},\ket{R}\}^{\otimes 4}$ and $d^2$ observables $\{\ket{H},\ket{V},\ket{D},\ket{R}\}^{\otimes 4}$, and then randomly select ${m\ll d^4}$ configurations from all possible combinations.
The process matrix of the CCCZ gate in the Pauli bases is given in Appendix~\ref{app:CCCZ}. With the Pauli bases set $\{\Gamma_{\alpha}^{\text{Pau}}\}$, we assess the reconstruction by means of $F(\hat{\chi}_{\text{cccz}},\chi_{\text{cccz}})$ and $T_{\text{MSE}}$; see Fig.~\ref{Fig_CCCZ_Gau}.

The four-qubit CCCZ gate requires the lowest sampling: a sampling rate of ${3.125\%}$ (${m=2048}$) yields a fidelity of ${F(\hat{\chi}_{\text{cccz}},\chi_{\text{cccz}}) \approx 0.971}$, with an MSE of  ${T_{\text{MSE}}\approx 5.77\times 10^{-5}}$. The fidelity can be improved to ${F(\hat{\chi}_{\text{cccz}},\chi_{\text{cccz}}) \approx 0.9945}$ at ${m=16384}$ (i.e., a sampling rate of $25\%$),  which further demonstrates the method’s scalability to larger systems.

\begin{figure}[t]
\includegraphics[width=.95\columnwidth]{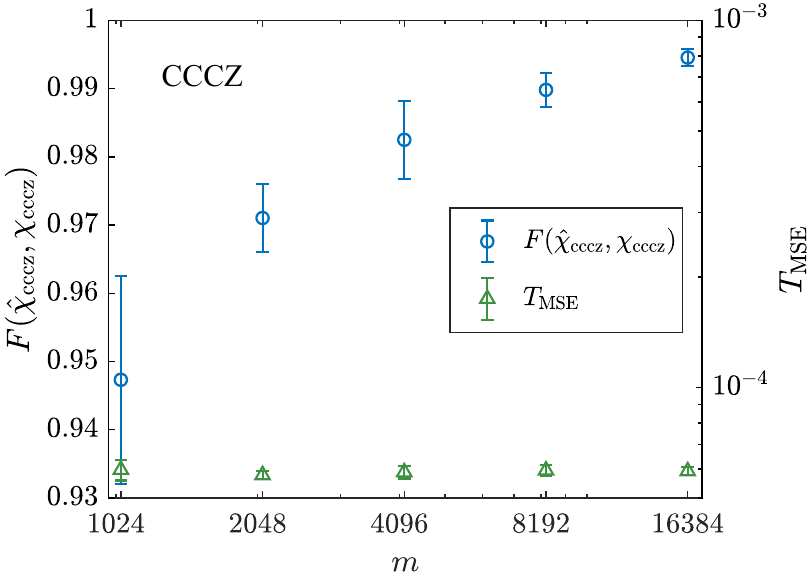}
    \caption{Fidelity $F(\hat{\chi}_{\text{cccz}},\chi_{\text{cccz}})$ and MSE $T_{\text{MSE}}$ as functions of the number of configurations $m$. The error bars are obtained from $10$ runs of tomography, with each run selecting $m$ combinations randomly. The blue points represent the fidelity between the reconstructed process matrix and true process matrix of the CCCZ gate. The green points show the MSE between the reconstructed sparse Gaussian noise and true sparse Gaussian noise. The Gaussian noise has a mean of zero, a standard deviation of $1$, and a sparsity level of ${s = \lfloor 0.1m \rfloor}$. The
regularization parameters are chosen as ${\mu_1 = 10^{-5}, \mu_2 = 10^{-3}}$.}
    \label{Fig_CCCZ_Gau}
\end{figure}
%

\section{Summary}\label{Sec:conclusion}
In this work, we have proposed a corrupted sensing framework for quantum process tomography that enables the simultaneous reconstruction of quantum processes and sparse measurement noise. By introducing the generalized restricted isometry property for the Choi-state representation, we established a probabilistic condition under which the extended measurement matrix satisfies the GRIP with high probability. 

Furthermore, systematic numerical simulations of several typical quantum gates were conducted in the process-matrix representation to validate the effectiveness of the method. In comparison with the Choi-state scheme, which selects random Pauli operators, the process-matrix approach utilizes the selected input–output measurement configurations more efficiently. The results show that the four-qubit CCCZ gate requires the lowest sampling rate to reach the fidelity threshold, highlighting the scalability of the method in handling larger multi-qubit systems.

Several aspects merit further investigation. First, based on the established GRIP condition, deriving explicit error bounds would offer a more complete theoretical guarantee for the proposed corrupted sensing framework. In addition, since the GRIP is a strong property for low-rank matrices and sparse noise, alternative analytical tools may be explored to provide theoretical insights. Finally, measurement tomography in the corrupted sensing setting can be considered.

\acknowledgments
We thank Yinfei Li and Ye-Chao Liu for helpful discussions. We are also grateful to Yulong Liu, Yanwu Gu and Zhenpeng Xu for inspiring discussions at an earlier stage of this project.
This work was supported by the National Natural Science Foundation of China (Grants No.~92265115 and No.~12175014) and the National Key R\&D Program of China (Grant No.~2022YFA1404900).


%

%
\appendix
\onecolumngrid

\section{The GRIP condition for corrupted sensing quantum state tomography}\label{app:RGstate}
Consider an $n$-qubit quantum system with dimension ${d=2^n}$, the unknown state of the system is denoted by $\rho$, which satisfies ${\Tr(\rho)=1}$ and ${\rho\succeq 0}$.
An $n$-qubit Pauli operator takes on the general form
\begin{equation}
  W=\bigotimes_{i=1}^n\sigma_i\,,
\end{equation}
where ${\sigma_i\in\{I, \sigma_x, \sigma_y, \sigma_z\}}$. Here, $\sigma_x, \sigma_y, \sigma_z$ are the three Pauli matrices, and $I$ represents the identity matrix. The density matrix $\rho$ can be expanded by using the Pauli observables ${\{W_i\!:i\in[d^2]\}}$.
By choosing $m$ elements $\{P_1, P_2, \cdots , P_{m}\}$ uniformly at random from the Pauli bases set $\{W_1,W_2,\cdots,W_{d^2}\}$, the $i$-th component $(\cA(\rho))_i$ of the linear map ${\cA\!:\bC^{d\times d}\to \bR^{m}}$ for Pauli measurements can be defined as
\begin{equation}\label{Amap}
  (\cA(\rho))_i=\sqrt{\frac{d}{m}}\Tr(P_i\rho)
\end{equation}
for ${\rho\in \bC^{d\times d}}$. The corresponding adjoint operator ${\cA^{\dagger}\!:\bR^{m}\to \bC^{d\times d}}$ is
\begin{equation}\label{2}
  \cA^{\dagger}(\by)=\sqrt{\frac{d}{m}}\sum_{i=1}^{m}y_iP_i\,,
\end{equation}
where ${\by\in \bR^{m}}$. The coefficient ${\sqrt{d/{m}}}$ is chosen so that ${\bE[\cA^{\dagger}(\cA(\rho))]=\rho}$.
\begin{proposition}
  Suppose $\by=\cA(\rho)+\bv+\bz$ with the extended matrix $\cM=[\cA,I]\in \bC^{{m}\times (d+{m})}$. For ${\delta\in[0,1)}$, if
  \begin{equation}\label{mbound}
    m\ge C_2rd\log^6d\,, \quad \text{and} \quad \sup \|\bv\|_{\infty}\leq \frac{\delta}{4\sqrt{ds}}\,,
  \end{equation}
for some constant ${C_2=O(1/\delta^2)}$, then with high probability, $\mathcal{M}$ has the GRIP constant satisfying $\delta_{r, s}\leq \delta$.
\end{proposition}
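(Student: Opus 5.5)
We mirror the argument already used for Proposition~\ref{proposition1}, now with the state-tomography map $\cA$ of Eq.~\eqref{Amap} in place of $\Lambda$. First we rewrite the GRIP constant as a supremum, in the same way as the RIP constant in Eq.~\eqref{supRIP}:
\begin{equation*}
\delta_{r,s}=\sup_{(\rho,\bv)}\left|\,\biggl\|\cM\begin{bmatrix}\rho\\ \bv\end{bmatrix}\biggr\|_2^2-\|\rho\|_F^2-\|\bv\|_2^2\right| .
\end{equation*}
The supremum runs over rank-$r$ matrices $\rho$ and $s$-sparse $\bv$. This form follows from the homogeneity of Eq.~\eqref{GRIP}.

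Since $\cM[\rho;\bv]=\cA(\rho)+\bv$, expanding the square gives
\begin{equation*}
\|\cA(\rho)+\bv\|_2^2=\|\cA(\rho)\|_2^2+\|\bv\|_2^2+2\langle\cA(\rho),\bv\rangle .
\end{equation*}
The output of $\cA$ is real because each $P_i$ is Hermitian, so the cross term is real. The triangle inequality then yields
\begin{equation*}
\delta_{r,s}\le \delta_1+\delta_2,\qquad
\delta_1=\sup\bigl|\|\cA(\rho)\|_2^2-\|\rho\|_F^2\bigr|,\qquad
\delta_2=2\sup|\langle\cA(\rho),\bv\rangle| .
\end{equation*}

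For $\delta_1$ we invoke the known RIP result for random Pauli measurements (Liu's bound). It states that with $m\ge C r d\log^6 d$ and $C=O(1/\epsilon^2)$, the map $\cA$ has rank-$r$ RIP constant at most $\epsilon$ with high probability. Choosing $\epsilon=\delta/2$ gives $\delta_1\le\delta/2$ under the stated condition $m\ge C_2 rd\log^6 d$ with $C_2=O(1/\delta^2)$. This is exactly the "known RIP results" step used in the proof of Proposition~\ref{proposition1}, applied in dimension $d$ rather than $d^2$.

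For $\delta_2$ we use a deterministic bound. Each Pauli $P_i$ has operator norm $1$, so $|\Tr(P_i\rho)|\le\|\rho\|_{\tr}\le 1$ for a density matrix, and hence $|(\cA(\rho))_i|\le\sqrt{d/m}$. By Cauchy--Schwarz on the $s$ nonzero coordinates of $\bv$, together with $\|\bv\|_2\le\sqrt{s}\,\|\bv\|_\infty$, we get
\begin{equation*}
|\langle\cA(\rho),\bv\rangle|\le \sqrt{d/m}\,\|\bv\|_1\le \sqrt{d/m}\, s\|\bv\|_\infty .
\end{equation*}
Since $s\le m$, this is at most $\sqrt{d/m}\,\sqrt{ms}\,\|\bv\|_\infty=\sqrt{ds}\,\|\bv\|_\infty$. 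Therefore $\delta_2\le 2\sqrt{ds}\,\sup\|\bv\|_\infty$, and the assumption $\sup\|\bv\|_\infty\le\delta/(4\sqrt{ds})$ gives $\delta_2\le\delta/2$. Adding the two bounds yields $\delta_{r,s}\le\delta$.

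The main obstacle is that the cross term cannot be controlled by a normalization argument. On the normalized set $\|\rho\|_F^2+\|\bv\|_2^2\le1$ alone, $\langle\cA(\rho),\bv\rangle$ is not small in general; this is why RIP of $\cA$ does not imply GRIP of $\cM$. We sidestep this by exploiting the physical constraint $\|\rho\|_{\tr}\le1$ and the worst-case $\ell_\infty$ bound on $\bv$, which makes the bound on $\delta_2$ fully deterministic. The only probabilistic ingredient is then the cited Pauli RIP theorem. We would state explicitly that the supremum in $\delta_2$ is taken over physical states and the admissible noise class, rather than over the full homogeneous set.
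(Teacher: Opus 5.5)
Your proposal is correct and follows the paper's proof essentially step for step. It uses the same split $\delta_{r,s}\le\delta_1+\delta_2$, invokes Liu's Pauli RIP theorem at level $\delta/2$ for $\delta_1$, and bounds the cross term deterministically via $|\Tr(P_i\rho)|\le 1$ for physical states plus the worst-case $\ell_\infty$ bound on $\bv$; the paper likewise passes from $\mathcal{T}$ to the physical set $\cT_q$ for this term, which is exactly the restriction you flag. The only cosmetic difference is that you go through $\|\bv\|_1\le s\|\bv\|_\infty$, obtaining the slightly sharper $s\sqrt{d/m}\,\|\bv\|_\infty$ before relaxing with $s\le m$, whereas the paper applies Cauchy--Schwarz directly with $\|\bu\|_2\le\sqrt{m}$ and $\|\bv\|_2\le\sqrt{s}\,\|\bv\|_\infty$.
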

\begin{proof}
The GRIP constant $\delta_{r, s}$ can be equivalently expressed as
\begin{equation}
  \delta_{r,s}=\sup_{(\rho,\bv)\in\mathcal{T}}\left|\biggl\|\cM\begin{bmatrix}
\rho \\
\bv
\end{bmatrix}\biggr\|_2^2-\|\rho\|_F^2-\|\bv\|_2^2\right|,
\end{equation}
where $\mathcal{T}\!:=\{(\rho,\bv)\!: \text{rank}(\rho)\leq r,\|\bv\|_0\le s, \|\rho\|_F^2+\|\bv\|_2^2=1,   \rho\in\bC^{d\times d},\bv\in \bR^{m}\}$. With ${\cM=[\cA,I]}$, $\biggl\|\cM\begin{bmatrix}
\rho \\
\bv
\end{bmatrix}\biggr\|_2^2=\|\cA (\rho)\|_2^2+\|\bv\|_2^2+2\langle\mathcal{A}(\rho), \bv\rangle$, the GRIP constant can be written as
\begin{equation}
 \begin{aligned}
  \delta_{r,s}&=\sup _{(\rho, \bv) \in \mathcal{T}}\left|\|\cA (\rho)\|_2^2+\|\bv\|_2^2+2\langle\mathcal{A}(\rho), \bv\rangle-\|\rho\|_F^2-\|\bv\|_2^2\right|\\
  &\le \underbrace{\sup _{(\rho, \bv) \in \mathcal{T}}\left|\|\cA (\rho)\|_2^2-\|\rho\|_F^2\right|}_{\delta_1}+\underbrace{2 \sup _{(\rho, \bv) \in \mathcal{T}}|\langle\mathcal{A} (\rho), \bv\rangle|}_{\delta_2}.
  \end{aligned}
\end{equation}
The aim is to derive a bound for the number of measurements $m$ such that for any $\delta\in[0,1)$, the GRIP constant $\delta_{r,s}$ is upper bounded by $\delta$.

Recall the RIP definition as written in Eq.~\eqref{supRIP}, we have
\begin{equation}
  \delta_1=\sup_{\rho\in \cD_r}\left|\|\cA (\rho)\|_2^2-\|\rho\|_F^2 \right|\leq \delta\,.
\end{equation}
Considering Theorem 2.1 of Ref.~\cite{Liu}, we
fix the constant ${\delta\in[0,1)}$, and let ${m=C\cdot rd\log^6d}$ for some constant ${C=O(1/\delta^2)}$ depending solely on $\delta$. Then, over the choice of $\{P_1,\cdots,P_{m}\}$, the map $\cA$ satisfies the RIP with high probability over the set of all ${X\in\bC^{d\times d}}$ such that ${\|X\|_{\tr}\leq \sqrt{r}\|X\|_F}$. Here, the set of all ${X\in\bC^{d\times d}}$ with ${\|X\|_{\tr}\leq \sqrt{r}\|X\|_F}$ is slightly larger than the set of all $X\in\bC^{d\times d}$ with rank $r$. Furthermore, the failure probability is exponentially small in $\delta^2C$.
Then, for $\delta\in[0,1)$,
${m=4Crd\log^6d}$ for some constant ${C=O(1/\delta^2)}$, ${\delta_1\leq \delta/2}$ holds with high probability. Therefore, proof of the GRIP condition reduces to bounding the term $\delta_2$.

Now consider $\rho\succeq 0,\Tr(\rho)=1$, thus $\cA(\rho)$ is a real vector,
 \begin{equation}
 \begin{aligned}
\delta_2&=2 \sup _{(\rho, \bv) \in \cT_q}|\langle\cA (\rho), \bv\rangle|=2 \sqrt{\frac{d}{m}}\sup _{(\rho, \bv) \in \cT_q}\left|\sum_{i=1}^{m}\Tr(P_i\rho)v_i\right|\\
&=2\sqrt{\frac{d}{m}} \sup _{(\rho, \bv) \in \cT_q}\left|\langle\bu,\bv\rangle\right|,
\end{aligned}
\end{equation}
where $\bu=\bu(\rho)=(\Tr(P_1\rho),\Tr(P_2\rho),\cdots,\Tr(P_{m}\rho))^{\top}$, and ${u_i=\Tr(P_i\rho)\in[-1,1]}$; $\cT_q = \{(\rho, \bv) \!: \rho \in \cX, \bv \in \cV\}$, and $\cX = \{ \rho\in \bC^{d \times d} : \rho \succeq 0, \Tr(\rho) = 1, \text{rank}(\rho) \leq r \}
$, $\cV = \{ \bv\in \bR^{m} \!: \|\bv\|_0 \leq s\}
$.
Note that $\sup _{\rho \in\cX}\|\bu\|_2=\sup _{\rho \in\cX}\sqrt{\sum_{i=1}^{m}u_i^2}\leq\sqrt{\sum_{i=1}^{m}1}=\sqrt{m}$, and 
\begin{equation}\label{eq:v2norm}
\sup_{\bv\in\cV} \|\bv\|_2
= \sup_{\bv\in\cV} \sqrt{\sum_{i \in \operatorname{supp}(\bv)} v_i^2}
\le \sqrt{s}\; \sup_{\bv\in\cV} \|\bv\|_\infty\,,
\end{equation}
where $\operatorname{supp}(\bv) := \{\, j \in \{1,\dots,m\} : v_j \neq 0 \,\}$ denotes the set of indices corresponding to the nonzero entries of $\bv$, and ${\|\bv\|_{\infty}=\max_i|v_i|}$. From the Cauchy-Schwarz inequality, we can get
\begin{equation}\label{eq:CSIn}
 \begin{aligned}
  \sup _{(\rho, \bv) \in \cT_q}\left|\langle\bu,\bv\rangle\right|
  &\leq\sup _{\rho \in\cX}\|\bu\|_2\cdot\sup_{\bv\in\cV}\|\bv\|_2\\
  &\leq \sup\sqrt{ms}\|\bv\|_{\infty}\,.
 \end{aligned}
\end{equation}
Therefore,
\begin{equation}
 \begin{aligned}
\delta_2&=2\sqrt{\frac{d}{m}} \sup _{(\rho, \bv) \in \cT_q}\left|\langle\bu,\bv\rangle\right|\\
&\leq \sup\ 2\sqrt{ds} \|\bv\|_{\infty}\,.
 \end{aligned}
\end{equation}
To ensure $\delta_2\leq \delta/2$,  it suffices to impose
\begin{equation}
 \sup\ \|\bv\|_{\infty}\leq \frac{\delta}{4\sqrt{ds}}\,.
\end{equation}
Combining $\delta_1\leq \delta/2$ and $\delta_2\leq \delta/2$, we get $\delta_{r,s}\leq \delta$, concluding the proof.
\end{proof}
\begin{remark}
The GRIP condition can be used to prove the recovery guarantee. Note that the upper bound on the noise in Eq.~\eqref{mbound} may correspond to a worst-case scenario. The actual recovery error also depends on the choice of estimator and its associated reconstruction strategy. Furthermore, when analyzing the condition
$\delta_2\leq c\delta$, one can also obtain an alternative bound in a probabilistic sense by employing techniques such as covering numbers.
\end{remark}

\section{Process matrices of various quantum gates}
In this section, we present explicit process-matrix representations of various quantum gates under the Pauli bases. In addition, the reconstruction results for the CZ and SWAP gates are shown in Fig.~\ref{Fig_CZ_SWAP_supplement_tex}, while the reconstruction results for the Fredkin gate are presented in Fig.~\ref{Fig_Fred_Gau_ave_50_N_1000}. These results serve as references for the numerical simulations and reconstructions discussed in the main text.

\subsection{Two-qubit CNOT, CZ, and SWAP gates}\label{app:2q_matrix}
The matrix form of two-qubit CNOT gate is represented as follows:
\begin{equation}
U_{\text{CNOT}}=\left[
\begin{array}{cccc}
    1&0&0&0\\
    0&1&0&0\\
    0&0&0&1\\
    0&0&1&0
\end{array}
\right]\!.
\end{equation}
Recall that the process matrix of a channel $\cE(\rho)=\sum_{i}E_i\rho E_i^{\dagger}=\sum_{\alpha,\beta=1}^{d^2}\chi_{\alpha\beta}\Gamma_{\alpha}\rho\Gamma_{\beta}^{\dagger}$.
The CNOT gate is unitary, so the corresponding channel ${\mathcal{E}_{\text{CNOT}}(\rho)=U_{\text{CNOT}}\rho U_{\text{CNOT}}^\dagger}$ has a single Kraus operator. Therefore, $U_{\text{CNOT}}=\sum_{\alpha=1}^{d^2} e_{\alpha}\Gamma_\alpha, \chi_{\alpha\beta}=e_{\alpha}e^*_{\beta}$.

Using Pauli bases representation, the CNOT gate can be written as ${U_{\text{CNOT}}=\frac{1}{2}(I\otimes I+I\otimes X+Z\otimes I-Z\otimes X)}$ of the tensor products of Pauli operators $\{I,X,Y,Z\}$~\cite{PhysRevLett.93.080502}. Consequently, the $16\times 16$ process matrix has $16$ non-zero elements, see Table~\ref{tab:CNOTPB}.
\begin{table}[t]
\centering
\setlength{\extrarowheight}{4pt}
\scalebox{1.1}{
\begin{tabular}{ *{6}{c} }
\toprule
& $\bm{II}$ & $\bm{IX}$ & $\bm{ZI}$ & $\bm{ZX}$ & \textbf{others} \\
\midrule
$\bm{II}$ & $\invplus\frac{1}{4}$ & $\invplus\frac{1}{4}$ & $\invplus\frac{1}{4}$ & $-\frac{1}{4}$ & $0$ \\
$\bm{IX}$ & $\invplus\frac{1}{4}$ & $\invplus\frac{1}{4}$ & $\invplus\frac{1}{4}$ & $-\frac{1}{4}$ & $0$ \\
$\bm{ZI}$ & $\invplus\frac{1}{4}$ & $\invplus\frac{1}{4}$ & $\invplus\frac{1}{4}$ & $-\frac{1}{4}$ & $0$ \\
$\bm{ZX}$ & $-\frac{1}{4}$ & $-\frac{1}{4}$ & $-\frac{1}{4}$ & $\invplus\frac{1}{4}$ & $0$ \\
\textbf{others} & $\invplus0$ & $\invplus0$ & $\invplus0$ & $\invplus0$ & $0$ \\
\bottomrule
\end{tabular}
}
    \caption{Process matrix elements of the CNOT gate under Pauli bases. The order of bases is as follows: $II,\allowbreak IX,\allowbreak IY,\allowbreak IZ,\allowbreak XI,\allowbreak XX,\allowbreak XY,\allowbreak XZ,\allowbreak YI,\allowbreak YX,\allowbreak YY,\allowbreak YZ,\allowbreak ZI,\allowbreak ZX,\allowbreak ZY,\allowbreak ZZ$.}
    \label{tab:CNOTPB}
\end{table}
Note that ${\Tr(\Gamma_{\alpha}^{\dagger}\Gamma_{\beta})=d\delta_{\alpha\beta}}$ if ${\{\Gamma_{\alpha}\in \mathbb{C}^{d\times d}\}}$ are the Pauli bases, for a trace-preserving operation ${\Tr(\chi^{\text{Pau}})=1}$.

The form of the process matrix is dependent on the choice of bases. We also provide the representations in the computational and singular value decomposition (SVD) bases for reference.
Using the computational bases representation, the CNOT can be written as $U_{\text{CNOT}}=\ket{00}\bra{00}+\ket{01}\bra{01}+\ket{10}\bra{11}+\ket{11}\bra{10}$. Consequently, the process matrix has $16$ non-zero elements, see Table~\ref{tab:CNOTCB}.
\begin{table}[t]
\centering
\setlength{\extrarowheight}{4pt}
\scalebox{1.1}{
\begin{tabular}{ *{6}{c} }
\toprule
& $\bm{\ket{00}\bra{00}}$ & $\bm{\ket{01}\bra{01}}$ & $\bm{\ket{10}\bra{11}}$ & $\bm{\ket{11}\bra{10}}$ & \textbf{others} \\
\midrule
$\bm{\ket{00}\bra{00}}$ & $\invplus 1$ & $\invplus 1$ & $\invplus 1$ & $\invplus 1$ & $0$ \\
$\bm{\ket{01}\bra{01}}$ & $\invplus 1$ & $\invplus 1$ & $\invplus 1$ & $\invplus 1$ & $0$ \\
$\bm{\ket{10}\bra{11}}$ & $\invplus 1$ & $\invplus 1$ & $\invplus 1$ & $\invplus 1$ & $0$ \\
$\bm{\ket{11}\bra{10}}$ & $\invplus 1$ & $\invplus 1$ & $\invplus 1$ & $\invplus 1$ & $0$ \\
\textbf{others} & $\invplus 0$ & $\invplus 0$ & $\invplus 0$ & $\invplus 0$ & $0$ \\
\bottomrule
\end{tabular}
}
    \caption{Process matrix elements of the CNOT gate under the computational bases.}
    \label{tab:CNOTCB}
\end{table}
Note that ${\Tr(\Gamma_{\alpha}^{\dagger}\Gamma_{\beta})=\delta_{\alpha\beta}}$ if ${\{\Gamma_{\alpha}\in \mathbb{C}^{d\times d}\}}$ are the computational bases, for a trace-preserving operation ${\Tr(\chi^{\text{Com}})=d}$.

The representation under the SVD bases can be obtained through the following steps. First, perform SVD on the process matrix in the computational bases $\chi^{\text{Com}}$:
\begin{equation}
    \chi^{\text{Com}}=V\text{diag}(d,0,\cdots,0)V^{\dagger}\,,
\end{equation}
where $V$ is a $d^2\times d^2$ unitary matrix.
Then, the SVD-bases matrices can be obtained by
\begin{equation}
    \{\Gamma_{\alpha}^{\text{SVD}}=\sum_{\beta=1}^{d^2}V_{\beta\alpha}\Gamma_{\beta}^{\text{Com}}\in\cC^{d\times d}\}_{\alpha=1}^{d^2}\,,
\end{equation}
where $\Gamma_{\beta}^{\text{Com}}\in\cC^{d\times d}$ is the computational bases. The CNOT can be represented as $U_{\text{CNOT}}=2\Gamma_1^{\text{SVD}}$. Therefore, the process matrix has only one non-zero element ${\chi^{\text{SVD}}_{11}=4}$.
Note that ${\Tr(\Gamma_{\alpha}^{\dagger}\Gamma_{\beta})=\delta_{\alpha\beta}}$ if ${\{\Gamma_{\alpha}\in \mathbb{C}^{d\times d}\}}$ are the SVD bases, for a trace-preserving operation ${\Tr(\chi^{\text{SVD}})=d}$.

The matrix form of the CZ gate is represented as follows:
\begin{equation}
U_{\text{CZ}}=\left[
\begin{array}{cccc}
    1&0&0&0\\
    0&1&0&0\\
    0&0&1&0\\
    0&0&0&-1
\end{array}
\right]\!.
\end{equation}
Using Pauli bases representation, the CZ gate can be written as ${U_{\text{CZ}}=\frac{1}{2}(I\otimes I+I\otimes Z+Z\otimes I-Z\otimes Z)}$ of the tensor products of Pauli operators  $\{I,X,Y,Z\}$; see Table~\ref{tab:CZPB}.
\begin{table}[t]
\centering
\setlength{\extrarowheight}{4pt}
\scalebox{1.1}{
\begin{tabular}{ *{6}{c} }
\toprule
& $\bm{II}$ & $\bm{IZ}$ & $\bm{ZI}$ & $\bm{ZZ}$ & \textbf{others} \\
\midrule
$\bm{II}$ & $\invplus\frac{1}{4}$ & $\invplus\frac{1}{4}$ & $\invplus\frac{1}{4}$ & $-\frac{1}{4}$ & $0$ \\
$\bm{IZ}$ & $\invplus\frac{1}{4}$ & $\invplus\frac{1}{4}$ & $\invplus\frac{1}{4}$ & $-\frac{1}{4}$ & $0$ \\
$\bm{ZI}$ & $\invplus\frac{1}{4}$ & $\invplus\frac{1}{4}$ & $\invplus\frac{1}{4}$ & $-\frac{1}{4}$ & $0$ \\
$\bm{ZZ}$ & $-\frac{1}{4}$ & $-\frac{1}{4}$ & $-\frac{1}{4}$ & $\invplus\frac{1}{4}$ & $0$ \\
\textbf{others} & $\invplus 0$ & $\invplus 0$ & $\invplus 0$ & $\invplus 0$ & $0$ \\
\bottomrule
\end{tabular}
}
    \caption{Process matrix elements of the CZ gate under Pauli bases. The order of bases is as follows: $II,\allowbreak IX,\allowbreak IY,\allowbreak IZ,\allowbreak XI,\allowbreak XX,\allowbreak XY,\allowbreak XZ,\allowbreak YI,\allowbreak YX,\allowbreak YY,\allowbreak YZ,\allowbreak ZI,\allowbreak ZX,\allowbreak ZY,\allowbreak ZZ$.}
    \label{tab:CZPB}
\end{table}

The matrix form of the SWAP gate is represented as follows:
\begin{equation}
U_{\text{SWAP}}=\left[
\begin{array}{cccc}
    1&0&0&0\\
    0&0&1&0\\
    0&1&0&0\\
    0&0&0&1
\end{array}
\right]\!.
\end{equation}
Using Pauli bases representation, the SWAP gate can be written as ${U_{\text{SWAP}}=\frac{1}{2}(I\otimes I+X\otimes X+Y\otimes Y+Z\otimes Z)}$ of the tensor products of Pauli operators  $\{I,X,Y,Z\}$; see Table~\ref{tab:PBSWAP}.
\begin{table}[t]
\centering
\setlength{\extrarowheight}{4pt}
\scalebox{1.1}{
\begin{tabular}{ *{6}{c} }
\toprule
& $\bm{II}$ & $\bm{XX}$ & $\bm{YY}$ & $\bm{ZZ}$ & \textbf{others} \\
\midrule
$\bm{II}$ & $\invplus\frac{1}{4}$ & $\invplus\frac{1}{4}$ & $\invplus\frac{1}{4}$ & $\invplus\frac{1}{4}$ & $0$ \\
$\bm{XX}$ & $\invplus\frac{1}{4}$ & $\invplus\frac{1}{4}$ & $\invplus\frac{1}{4}$ & $\invplus\frac{1}{4}$ & $0$ \\
$\bm{YY}$ & $\invplus\frac{1}{4}$ & $\invplus\frac{1}{4}$ & $\invplus\frac{1}{4}$ & $\invplus\frac{1}{4}$ & $0$ \\
$\bm{ZZ}$ & $\invplus\frac{1}{4}$ & $\invplus\frac{1}{4}$ & $\invplus\frac{1}{4}$ & $\invplus\frac{1}{4}$ & $0$ \\
\textbf{others} & $\invplus 0$ & $\invplus 0$ & $\invplus 0$ & $\invplus 0$ & $0$ \\
\bottomrule
\end{tabular}
}
    \caption{Process matrix elements of the SWAP gate under Pauli bases. The order of bases is as follows: $II,\allowbreak IX,\allowbreak IY,\allowbreak IZ,\allowbreak XI,\allowbreak XX,\allowbreak XY,\allowbreak XZ,\allowbreak YI,\allowbreak YX,\allowbreak YY,\allowbreak YZ,\allowbreak ZI,\allowbreak ZX,\allowbreak ZY,\allowbreak ZZ$.}
    \label{tab:PBSWAP}
\end{table}

The reconstruction results for the CZ and SWAP gates are shown in Fig.~\ref{Fig_CZ_SWAP_supplement_tex}, while the results for the CNOT gate can be found in Fig.~\ref{Fig_CNOT_Gau_ave_100_N_1000} in the main text.
\begin{figure}[t]
    \includegraphics[width=.95\columnwidth]{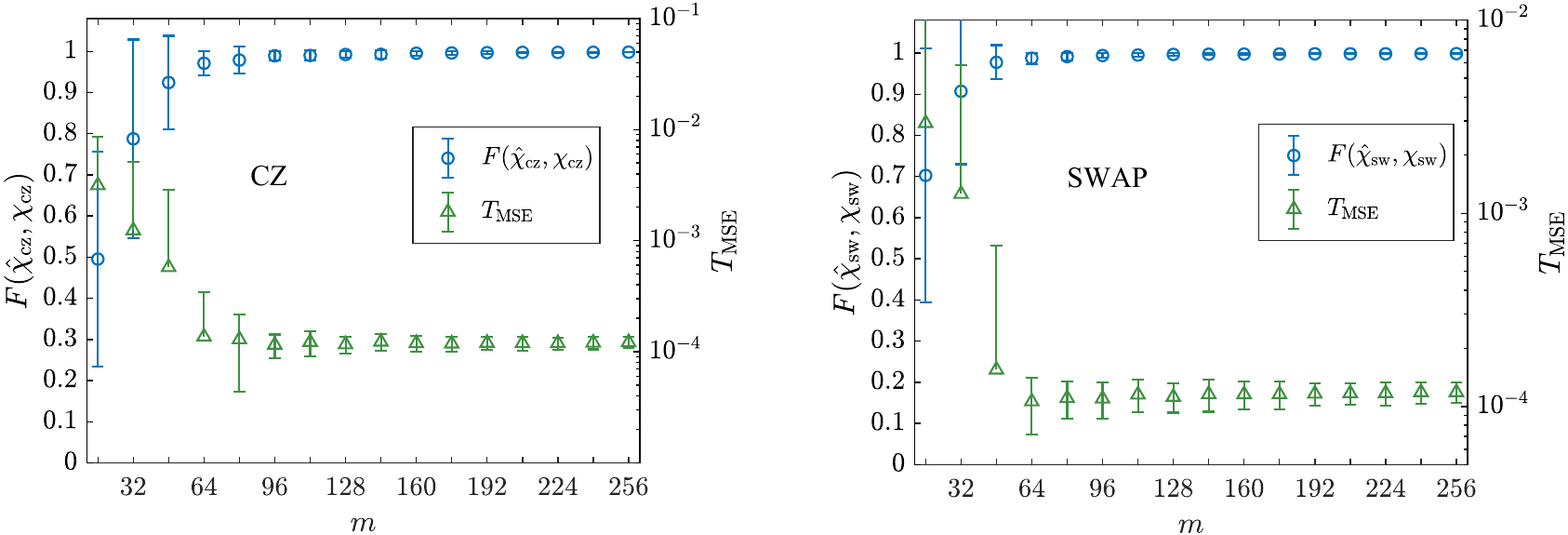}
    \caption{Fidelity $F(\hat{\chi},\chi)$ and MSE $T_{\text{MSE}}$ as functions of the number of configurations $m$  for the CZ and SWAP gates. The error bars are obtained from $100$ runs of tomography, with each run selecting $m$ combinations randomly. The blue points represent the fidelity between the reconstructed process matrix and true process matrix. The green points show the MSE between the reconstructed sparse Gaussian noise and true sparse Gaussian noise. The Gaussian noise has a mean of zero, a standard deviation of $1$, and a sparsity level of ${s = \lfloor 0.1m \rfloor}$. The
regularization parameters are chosen as ${\mu_1 = 10^{-5}, \mu_2 = 10^{-3}}$.}
    \label{Fig_CZ_SWAP_supplement_tex}
\end{figure}
\subsection{Three-qubit Toffoli and Fredkin gates}\label{app:3q_matrix}
The matrix form of three-qubit Toffoli gate is represented as follows:
\begin{equation}
U_{\text{Tof}}=\left[
\begin{array}{cccccccc}
    1&0&0&0&0&0&0&0\\
    0&1&0&0&0&0&0&0\\
    0&0&1&0&0&0&0&0\\
    0&0&0&1&0&0&0&0\\
    0&0&0&0&1&0&0&0\\
    0&0&0&0&0&1&0&0\\
    0&0&0&0&0&0&0&1\\
    0&0&0&0&0&0&1&0
\end{array}
\right]\!.
\end{equation}
Using Pauli bases representation, the Toffoli gate can be written as
\begin{equation}
\begin{aligned}
U_{\text{Tof}} &= \frac{1}{4} (3I \otimes I \otimes I + I \otimes I \otimes X + I \otimes Z \otimes I \\
&\quad - I \otimes Z \otimes X + Z \otimes I \otimes I - Z \otimes I \otimes X \\
&\quad -  Z \otimes Z \otimes I +  Z \otimes Z \otimes X)
\end{aligned}
\end{equation}
of the tensor products of Pauli operators  $\{I,X,Y,Z\}$; see Table~\ref{tab:TofPB}.
\begin{table}[t]
\centering
\setlength{\extrarowheight}{4pt}
\scalebox{1}{
\begin{tabular}{ *{10}{c} }
\toprule
& $\bm{III}$ & $\bm{IIX}$ & $\bm{IZI}$ & $\bm{IZX}$ & $\bm{ZII}$ & $\bm{ZIX}$ & $\bm{ZZI}$ & $\bm{ZZX}$ & \textbf{others} \\
\midrule
$\bm{III}$ & $\invplus\frac{9}{16}$ & $\invplus\frac{3}{16}$ & $\invplus\frac{3}{16}$ & $-\frac{3}{16}$ & $\invplus\frac{3}{16}$ & $-\frac{3}{16}$ & $-\frac{3}{16}$ & $\invplus\frac{3}{16}$ & $0$ \\
$\bm{IIX}$ & $\invplus\frac{3}{16}$ & $\invplus\frac{1}{16}$ & $\invplus\frac{1}{16}$ & $-\frac{1}{16}$ & $\invplus\frac{1}{16}$ & $-\frac{1}{16}$ & $-\frac{1}{16}$ & $\invplus\frac{1}{16}$ & $0$ \\
$\bm{IZI}$ & $\invplus\frac{3}{16}$ & $\invplus\frac{1}{16}$ & $\invplus\frac{1}{16}$ & $-\frac{1}{16}$ & $\invplus\frac{1}{16}$ & $-\frac{1}{16}$ & $-\frac{1}{16}$ & $\invplus\frac{1}{16}$ & $0$ \\
$\bm{IZX}$ & $-\frac{3}{16}$ & $-\frac{1}{16}$ & $-\frac{1}{16}$ & $\invplus\frac{1}{16}$ & $-\frac{1}{16}$ & $\invplus\frac{1}{16}$ & $\invplus\frac{1}{16}$ & $-\frac{1}{16}$ & $0$ \\
$\bm{ZII}$ & $\invplus\frac{3}{16}$ & $\invplus\frac{1}{16}$ & $\invplus\frac{1}{16}$ & $-\frac{1}{16}$ & $\invplus\frac{1}{16}$ & $-\frac{1}{16}$ & $-\frac{1}{16}$ & $\invplus\frac{1}{16}$ & $0$ \\
$\bm{ZIX}$ & $-\frac{3}{16}$ & $-\frac{1}{16}$ & $-\frac{1}{16}$ & $\invplus\frac{1}{16}$ & $-\frac{1}{16}$ & $\invplus\frac{1}{16}$ & $\invplus\frac{1}{16}$ & $-\frac{1}{16}$ & $0$ \\
$\bm{ZZI}$ & $-\frac{3}{16}$ & $-\frac{1}{16}$ & $-\frac{1}{16}$ & $\invplus\frac{1}{16}$ & $-\frac{1}{16}$ & $\invplus\frac{1}{16}$ & $\invplus\frac{1}{16}$ & $-\frac{1}{16}$ & $0$ \\
$\bm{ZZX}$ & $\invplus\frac{3}{16}$ & $\invplus\frac{1}{16}$ & $\invplus\frac{1}{16}$ & $-\frac{1}{16}$ & $\invplus\frac{1}{16}$ & $-\frac{1}{16}$ & $-\frac{1}{16}$ & $\invplus\frac{1}{16}$ & $0$ \\
\textbf{others} & $\invplus 0$ & $\invplus 0$ & $\invplus 0$ & $\invplus 0$ & $\invplus 0$ & $\invplus 0$ & $\invplus 0$ & $\invplus 0$ & $0$ \\
\bottomrule
\end{tabular}
}
    \caption{Process matrix elements of the Toffoli gate under Pauli bases. The order of the Pauli bases is as follows: $III,\allowbreak IIX,\allowbreak IIY,\allowbreak IIZ,...,\allowbreak ZZZ$.}
    \label{tab:TofPB}
\end{table}

The matrix form of three-qubit Fredkin gate is given as
\begin{equation}
U_{\text{Fred}}=\left[
\begin{array}{cccccccc}
    1&0&0&0&0&0&0&0\\
    0&1&0&0&0&0&0&0\\
    0&0&1&0&0&0&0&0\\
    0&0&0&1&0&0&0&0\\
    0&0&0&0&1&0&0&0\\
    0&0&0&0&0&0&1&0\\
    0&0&0&0&0&1&0&0\\
    0&0&0&0&0&0&0&1
\end{array}
\right]\!.
\end{equation}
Using Pauli bases representation, the Fredkin gate can be written as
\begin{equation}
\begin{aligned}
U_{\text{Fred}} &= \frac{1}{4} (3I \otimes I \otimes I + I \otimes X \otimes X + I \otimes Y \otimes Y \\
&\quad + I \otimes Z \otimes Z + Z \otimes I \otimes I - Z \otimes X \otimes X \\
&\quad -  Z \otimes Y \otimes Y -  Z \otimes Z \otimes Z)
\end{aligned}
\end{equation}
of the tensor products of Pauli operators $\{I,X,Y,Z\}$; see Table~\ref{tab:FredPB}.
\begin{table}[t]
\centering
\setlength{\extrarowheight}{4pt}
\scalebox{1}{
\begin{tabular}{ *{10}{c} }
\toprule
& $\bm{III}$ & $\bm{IXX}$ & $\bm{IYY}$ & $\bm{IZZ}$ & $\bm{ZII}$ & $\bm{ZXX}$ & $\bm{ZYY}$ & $\bm{ZZZ}$ & \textbf{others} \\
\midrule
$\bm{III}$ & $\invplus\frac{9}{16}$ & $\invplus\frac{3}{16}$ & $\invplus\frac{3}{16}$ & $\invplus\frac{3}{16}$ & $\invplus\frac{3}{16}$ & $-\frac{3}{16}$ & $-\frac{3}{16}$ & $-\frac{3}{16}$ & $0$ \\
$\bm{IXX}$ & $\invplus\frac{3}{16}$ & $\invplus\frac{1}{16}$ & $\invplus\frac{1}{16}$ & $\invplus\frac{1}{16}$ & $\invplus\frac{1}{16}$ & $-\frac{1}{16}$ & $-\frac{1}{16}$ & $-\frac{1}{16}$ & $0$ \\
$\bm{IYY}$ & $\invplus\frac{3}{16}$ & $\invplus\frac{1}{16}$ & $\invplus\frac{1}{16}$ & $\invplus\frac{1}{16}$ & $\invplus\frac{1}{16}$ & $-\frac{1}{16}$ & $-\frac{1}{16}$ & $-\frac{1}{16}$ & $0$ \\
$\bm{IZZ}$ & $\invplus\frac{3}{16}$ & $\invplus\frac{1}{16}$ & $\invplus\frac{1}{16}$ & $\invplus\frac{1}{16}$ & $\invplus\frac{1}{16}$ & $-\frac{1}{16}$ & $-\frac{1}{16}$ & $-\frac{1}{16}$ & $0$ \\
$\bm{ZII}$ & $\invplus\frac{3}{16}$ & $\invplus\frac{1}{16}$ & $\invplus\frac{1}{16}$ & $\invplus\frac{1}{16}$ & $\invplus\frac{1}{16}$ & $-\frac{1}{16}$ & $-\frac{1}{16}$ & $-\frac{1}{16}$ & $0$ \\
$\bm{ZXX}$ & $-\frac{3}{16}$ & $-\frac{1}{16}$ & $-\frac{1}{16}$ & $-\frac{1}{16}$ & $-\frac{1}{16}$ & $\invplus\frac{1}{16}$ & $\invplus\frac{1}{16}$ & $\invplus\frac{1}{16}$ & $0$ \\
$\bm{ZYY}$ & $-\frac{3}{16}$ & $-\frac{1}{16}$ & $-\frac{1}{16}$ & $-\frac{1}{16}$ & $-\frac{1}{16}$ & $\invplus\frac{1}{16}$ & $\invplus\frac{1}{16}$ & $\invplus\frac{1}{16}$ & $0$ \\
$\bm{ZZZ}$ & $-\frac{3}{16}$ & $-\frac{1}{16}$ & $-\frac{1}{16}$ & $-\frac{1}{16}$ & $-\frac{1}{16}$ & $\invplus\frac{1}{16}$ & $\invplus\frac{1}{16}$ & $\invplus\frac{1}{16}$ & $0$ \\
\textbf{others} & $\invplus 0$ & $\invplus 0$ & $\invplus 0$ & $\invplus 0$ & $\invplus 0$ & $\invplus 0$ & $\invplus 0$ & $\invplus 0$ & $0$ \\
\bottomrule
\end{tabular}
}
    \caption{Process matrix elements of the Fredkin gate under Pauli bases. The order of the Pauli bases is as follows: $III,\allowbreak IIX,\allowbreak IIY,\allowbreak IIZ,...,\allowbreak ZZZ$.}
    \label{tab:FredPB}
\end{table}
The reconstruction results for the Fredkin gate are shown in Fig.~\ref{Fig_Fred_Gau_ave_50_N_1000} as a supplementary illustration to the main text.
\begin{figure}[t]
    \includegraphics[width=.5\columnwidth]{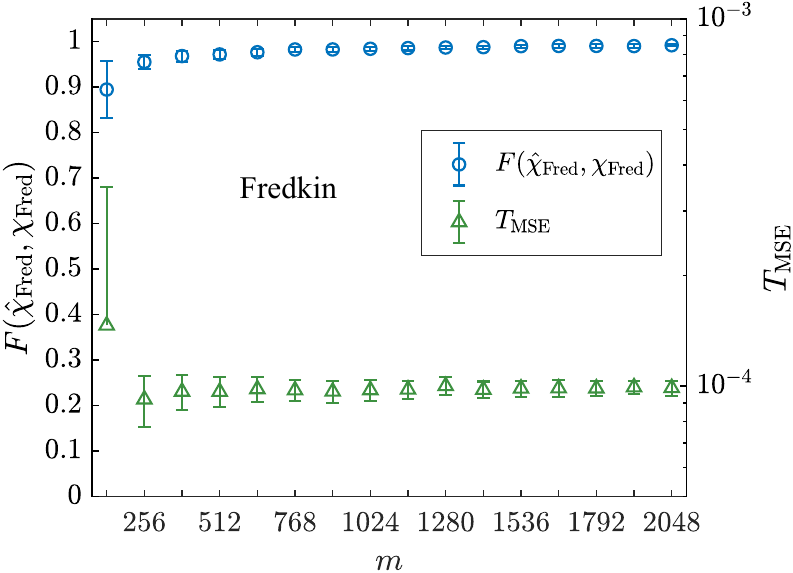}
    \caption{Fidelity $F(\hat{\chi}_{\text{Fred}},\chi_{\text{Fred}})$ and MSE $T_{\text{MSE}}$ as functions of the number of configurations $m$. The error bars are obtained from $50$ runs of tomography, with each run selecting $m$ combinations randomly. The blue points represent the fidelity between the reconstructed process matrix and true process matrix of the Fredkin gate. The green points show the MSE between the reconstructed sparse Gaussian noise and true sparse Gaussian noise. The Gaussian noise has a mean of zero, a standard deviation of $1$, and a sparsity level of ${s = \lfloor 0.1m \rfloor}$. The
regularization parameters are chosen as ${\mu_1 = 10^{-5}, \mu_2 = 10^{-3}}$.}
    \label{Fig_Fred_Gau_ave_50_N_1000}
\end{figure}

\subsection{Four-qubit CCCZ gate}\label{app:CCCZ}
Using Pauli bases representation, the CCCZ gate can be written as
\begin{equation}
\begin{aligned}
U_{\text{CCCZ}} &=\frac{1}{8}(7IIII+IIIZ+IIZI-IIZZ\\
&+IZII-IZIZ-IZZI+IZZZ\\
&+ZIII-ZIIZ-ZIZI+ZIZZ\\
&-ZZII+ZZIZ+ZZZI-ZZZZ)\,.
\end{aligned}
\end{equation}
See Table~\ref{tab:CCCZPB} for the  process matrix elements of the CCCZ gate under Pauli bases.
\begin{table}[t]
\centering
\setlength{\extrarowheight}{4pt} 
\setlength{\tabcolsep}{1pt}
\scalebox{1}{ 
\begin{tabular}{ *{18}{c} } 
\toprule
& $\bm{IIII}$ & $\bm{IIIZ}$ & $\bm{IIZI}$ & $\bm{IIZZ}$ & $\bm{IZII}$ & $\bm{IZIZ}$ & $\bm{IZZI}$ & $\bm{IZZZ}$ &$\bm{ZIII}$ & $\bm{ZIIZ}$ & $\bm{ZIZI}$ & $\bm{ZIZZ}$ & $\bm{ZZII}$ & $\bm{ZZIZ}$ & $\bm{ZZZI}$ & $\bm{ZZZZ}$ & \textbf{others} \\
\midrule
$\bm{IIII}$ & $\invplus \frac{49}{64}$ & $\invplus \frac{7}{64}$ & $\invplus \frac{7}{64}$ & $-\frac{7}{64}$ & $\invplus \frac{7}{64}$ & $-\frac{7}{64}$ & $-\frac{7}{64}$ & $\invplus \frac{7}{64}$ &
$\invplus \frac{7}{64}$ & $-\frac{7}{64}$ & $-\frac{7}{64}$ & $\invplus \frac{7}{64}$ & $-\frac{7}{64}$ & $\invplus \frac{7}{64}$ & $\invplus \frac{7}{64}$ & $-\frac{7}{64}$ & $0$ \\
$\bm{IIIZ}$ & $\invplus \frac{7}{64}$ & $\invplus \frac{1}{64}$ & $\invplus \frac{1}{64}$ & $-\frac{1}{64}$ & $\invplus \frac{1}{64}$ & $-\frac{1}{64}$ & $-\frac{1}{64}$ & $\invplus \frac{1}{64}$ &
$\invplus \frac{1}{64}$ & $-\frac{1}{64}$ & $-\frac{1}{64}$ & $\invplus \frac{1}{64}$ & $-\frac{1}{64}$ & $\invplus \frac{1}{64}$ & $\invplus \frac{1}{64}$ & $-\frac{1}{64}$ & $0$\\
$\bm{IIZI}$ & $\invplus \frac{7}{64}$ & $\invplus \frac{1}{64}$ & $\invplus \frac{1}{64}$ & $-\frac{1}{64}$ & $\invplus \frac{1}{64}$ & $-\frac{1}{64}$ & $-\frac{1}{64}$ & $\invplus \frac{1}{64}$ &
$\invplus \frac{1}{64}$ & $-\frac{1}{64}$ & $-\frac{1}{64}$ & $\invplus \frac{1}{64}$ & $-\frac{1}{64}$ & $\invplus \frac{1}{64}$ & $\invplus \frac{1}{64}$ & $-\frac{1}{64}$ & $0$ \\
$\bm{IIZZ}$ & $-\frac{7}{64}$ & $-\frac{1}{64}$ & $-\frac{1}{64}$ & $\invplus \frac{1}{64}$ & $-\frac{1}{64}$ & $\invplus \frac{1}{64}$ & $\invplus \frac{1}{64}$ & $-\frac{1}{64}$ &
$-\frac{1}{64}$ & $\invplus \frac{1}{64}$ & $\invplus \frac{1}{64}$ & $-\frac{1}{64}$ & $\invplus \frac{1}{64}$ & $-\frac{1}{64}$ & $-\frac{1}{64}$ & $\invplus \frac{1}{64}$ & $0$ \\
$\bm{IZII}$ & $\invplus \frac{7}{64}$ & $\invplus \frac{1}{64}$ & $\invplus \frac{1}{64}$ & $-\frac{1}{64}$ & $\invplus \frac{1}{64}$ & $-\frac{1}{64}$ & $-\frac{1}{64}$ & $\invplus \frac{1}{64}$ &
$\invplus \frac{1}{64}$ & $-\frac{1}{64}$ & $-\frac{1}{64}$ & $\invplus \frac{1}{64}$ & $-\frac{1}{64}$ & $\invplus \frac{1}{64}$ & $\invplus \frac{1}{64}$ & $-\frac{1}{64}$ & $0$ \\
$\bm{IZIZ}$ & $-\frac{7}{64}$ & $-\frac{1}{64}$ & $-\frac{1}{64}$ & $\invplus \frac{1}{64}$ & $-\frac{1}{64}$ & $\invplus \frac{1}{64}$ & $\invplus \frac{1}{64}$ & $-\frac{1}{64}$ &
$-\frac{1}{64}$ & $\invplus \frac{1}{64}$ & $\invplus \frac{1}{64}$ & $-\frac{1}{64}$ & $\invplus \frac{1}{64}$ & $-\frac{1}{64}$ & $-\frac{1}{64}$ & $\invplus \frac{1}{64}$ & $0$ \\
$\bm{IZZI}$ & $-\frac{7}{64}$ & $-\frac{1}{64}$ & $-\frac{1}{64}$ & $\invplus \frac{1}{64}$ & $-\frac{1}{64}$ & $\invplus \frac{1}{64}$ & $\invplus \frac{1}{64}$ & $-\frac{1}{64}$ &
$-\frac{1}{64}$ & $\invplus \frac{1}{64}$ & $\invplus \frac{1}{64}$ & $-\frac{1}{64}$ & $\invplus \frac{1}{64}$ & $-\frac{1}{64}$ & $-\frac{1}{64}$ & $\invplus \frac{1}{64}$ & $0$ \\
$\bm{IZZZ}$ & $\invplus \frac{7}{64}$ & $\invplus \frac{1}{64}$ & $\invplus \frac{1}{64}$ & $-\frac{1}{64}$ & $\invplus \frac{1}{64}$ & $-\frac{1}{64}$ & $-\frac{1}{64}$ & $\invplus \frac{1}{64}$ &
$\invplus \frac{1}{64}$ & $-\frac{1}{64}$ & $-\frac{1}{64}$ & $\invplus \frac{1}{64}$ & $-\frac{1}{64}$ & $\invplus \frac{1}{64}$ & $\invplus \frac{1}{64}$ & $-\frac{1}{64}$ & $0$ \\
$\bm{ZIII}$ & $\invplus \frac{7}{64}$ & $\invplus \frac{1}{64}$ & $\invplus \frac{1}{64}$ & $-\frac{1}{64}$ & $\invplus \frac{1}{64}$ & $-\frac{1}{64}$ & $-\frac{1}{64}$ & $\invplus \frac{1}{64}$ &
$\invplus \frac{1}{64}$ & $-\frac{1}{64}$ & $-\frac{1}{64}$ & $\invplus \frac{1}{64}$ & $-\frac{1}{64}$ & $\invplus \frac{1}{64}$ & $\invplus \frac{1}{64}$ & $-\frac{1}{64}$ & $0$ \\
$\bm{ZIIZ}$ & $-\frac{7}{64}$ & $-\frac{1}{64}$ & $-\frac{1}{64}$ & $\invplus \frac{1}{64}$ & $-\frac{1}{64}$ & $\invplus \frac{1}{64}$ & $\invplus \frac{1}{64}$ & $-\frac{1}{64}$ &
$-\frac{1}{64}$ & $\invplus \frac{1}{64}$ & $\invplus \frac{1}{64}$ & $-\frac{1}{64}$ & $\invplus \frac{1}{64}$ & $-\frac{1}{64}$ & $-\frac{1}{64}$ & $\invplus \frac{1}{64}$ & $0$ \\
$\bm{ZIZI}$ & $-\frac{7}{64}$ & $-\frac{1}{64}$ & $-\frac{1}{64}$ & $\invplus \frac{1}{64}$ & $-\frac{1}{64}$ & $\invplus \frac{1}{64}$ & $\invplus \frac{1}{64}$ & $-\frac{1}{64}$ &
$-\frac{1}{64}$ & $\invplus \frac{1}{64}$ & $\invplus \frac{1}{64}$ & $-\frac{1}{64}$ & $\invplus \frac{1}{64}$ & $-\frac{1}{64}$ & $-\frac{1}{64}$ & $\invplus \frac{1}{64}$ & $0$ \\
$\bm{ZIZZ}$ & $\invplus \frac{7}{64}$ & $\invplus \frac{1}{64}$ & $\invplus \frac{1}{64}$ & $-\frac{1}{64}$ & $\invplus \frac{1}{64}$ & $-\frac{1}{64}$ & $-\frac{1}{64}$ & $\invplus \frac{1}{64}$ &
$\invplus \frac{1}{64}$ & $-\frac{1}{64}$ & $-\frac{1}{64}$ & $\invplus \frac{1}{64}$ & $-\frac{1}{64}$ & $\invplus \frac{1}{64}$ & $\invplus \frac{1}{64}$ & $-\frac{1}{64}$ & $0$ \\
$\bm{ZZII}$ & $-\frac{7}{64}$ & $-\frac{1}{64}$ & $-\frac{1}{64}$ & $\invplus \frac{1}{64}$ & $-\frac{1}{64}$ & $\invplus \frac{1}{64}$ & $\invplus \frac{1}{64}$ & $-\frac{1}{64}$ &
$-\frac{1}{64}$ & $\invplus \frac{1}{64}$ & $\invplus \frac{1}{64}$ & $-\frac{1}{64}$ & $\invplus \frac{1}{64}$ & $-\frac{1}{64}$ & $-\frac{1}{64}$ & $\invplus \frac{1}{64}$ & $0$ \\
$\bm{ZZIZ}$ & $\invplus \frac{7}{64}$ & $\invplus \frac{1}{64}$ & $\invplus \frac{1}{64}$ & $-\frac{1}{64}$ & $\invplus \frac{1}{64}$ & $-\frac{1}{64}$ & $-\frac{1}{64}$ & $\invplus \frac{1}{64}$ &
$\invplus \frac{1}{64}$ & $-\frac{1}{64}$ & $-\frac{1}{64}$ & $\invplus \frac{1}{64}$ & $-\frac{1}{64}$ & $\invplus \frac{1}{64}$ & $\invplus \frac{1}{64}$ & $-\frac{1}{64}$ & $0$\\
$\bm{ZZZI}$ & $\invplus \frac{7}{64}$ & $\invplus \frac{1}{64}$ & $\invplus \frac{1}{64}$ & $-\frac{1}{64}$ & $\invplus \frac{1}{64}$ & $-\frac{1}{64}$ & $-\frac{1}{64}$ & $\invplus \frac{1}{64}$ &
$\invplus \frac{1}{64}$ & $-\frac{1}{64}$ & $-\frac{1}{64}$ & $\invplus \frac{1}{64}$ & $-\frac{1}{64}$ & $\invplus \frac{1}{64}$ & $\invplus \frac{1}{64}$ & $-\frac{1}{64}$ & $0$ \\
$\bm{ZZZZ}$ & $-\frac{7}{64}$ & $-\frac{1}{64}$ & $-\frac{1}{64}$ & $\invplus \frac{1}{64}$ & $-\frac{1}{64}$ & $\invplus \frac{1}{64}$ & $\invplus \frac{1}{64}$ & $-\frac{1}{64}$ &
$-\frac{1}{64}$ & $\invplus\frac{1}{64}$ & $\invplus\frac{1}{64}$ & $-\frac{1}{64}$ & $\invplus\frac{1}{64}$ & $-\frac{1}{64}$ & $-\frac{1}{64}$ & $\invplus\frac{1}{64}$ & $0$ \\
\textbf{others} & $\invplus 0$ & $\invplus 0$ & $\invplus 0$ & $\invplus 0$ & $\invplus 0$ & $\invplus 0$ & $\invplus 0$ & $\invplus 0$ & $\invplus 0$ & $\invplus 0$ & $\invplus 0$ & $\invplus 0$ & $\invplus 0$ & $\invplus 0$ & $\invplus 0$ & $\invplus 0$ & $0$ \\
\bottomrule 
\end{tabular} }
    \caption{Process matrix elements of the CCCZ gate under Pauli bases. The order of the Pauli bases is as follows: $IIII,\allowbreak IIIX,\allowbreak IIIY,\allowbreak IIIZ,\allowbreak ...,\allowbreak ZZZZ$.}
    \label{tab:CCCZPB}
\end{table}

\end{document}